\documentclass[a4paper]{IEEEtran}

\pdfoutput=1

\usepackage[australian]{babel}
\usepackage[T1]{fontenc} 
\usepackage{lmodern}
\usepackage{amssymb}
\usepackage{amsmath}
\usepackage{amsthm}
\usepackage{mathrsfs} 
\usepackage{textcomp} 
\usepackage[numbers,square,compress]{natbib}

\usepackage{graphicx}
\usepackage[caption=false]{subfig}
\usepackage{nicefrac}
\usepackage{fixmath}
\usepackage{bm}
\usepackage{mdwmath} 
\usepackage{mdwtab}
\usepackage{fixltx2e} 
\usepackage{array}
\usepackage[printonlyused,withpage]{acronym} 
\usepackage{microtype} 

\newcommand{\mat}[1]{\mathbold{#1}} 
\newcommand{\matTxt}[1]{\boldsymbol{\mathit{#1}}}     

\makeatletter
\def\imod#1{\allowbreak\mkern10mu({\operator@font mod}\,#1)}
\makeatother

\DeclareMathAlphabet{\mathpzc}{OT1}{pzc}{m}{it} 

\DeclareMathSizes{10}{9}{7}{5}

\newcommand{\placement}{htb}

\newcommand{\primeSymbol}{\textit{p}}

\newcommand{\firstIndex}{j}

\newcommand{\FFTfirstIndex}{u}

\newcommand{\eqnTag}{Eq.}

\newcommand{\figTag}{Fig.}
\newcommand{\figsTag}{Figs}

\newcommand{\secTag}{Sec.}

\newcommand{\corTag}{Cor.}
\newcommand{\propTag}{Prop.}
\newcommand{\propsTag}{Props}

\newcommand{\numberSymbol}{N}
\newcommand{\ghostNumberSymbol}{\mathpzc{N}}
\newcommand{\ghostSymbol}{\mathcal{G}}
\newcommand{\modulusSymbol}{M}
\newcommand{\primRootSymbol}{a}
\newcommand{\rootSymbol}{\alpha}


\newcommand{\MinkowskiSymbol}{\oplus}

\newcommand{\FourierSymbol}{\mat{F}}

\newcommand{\VandermondeSymbol}{\mat{V}}
\newcommand{\CircVandermondeSymbol}{\mat{W}}
\newcommand{\inputVecSymbol}{\matTxt{x}}

\newcommand{\wholeNumbers}{\mathbb{N}_0}

\newcommand{\integerNumbers}{\mathbb{Z}}

\newtheorem{theorem}{Proposition}
\newtheorem{corollary}{Corollary}
\newtheorem{definition}{Definition}

\title{Recovering Missing Slices of the Discrete Fourier Transform using Ghosts}
\author{Shekhar Chandra, Imants Svalbe\thanks{Shekhar Chandra and Imants Svalbe are with the School of Physics, Monash University, Australia. Email: Shekhar.Chandra@monash.edu or Imants.Svalbe@monash.edu}, Jeanpierre Gu\'edon, Andrew Kingston\thanks{Andrew Kingston is with the Department of Applied Mathematics, Australian National University.} and Nicolas Normand\thanks{Jeanpierre Gu\'edon and Nicolas Normand are with IRCCyN-IVC, \'{E}cole polytechnique de l'Universit\'{e} de Nantes, France.}}

\begin{document}

\maketitle

 \begin{abstract}
 The \ac{DFT} underpins the solution to many inverse problems commonly possessing missing or un-measured frequency information. This incomplete coverage of Fourier space always produces systematic artefacts called Ghosts. In this paper, a fast and exact method for de-convolving cyclic artefacts caused by missing slices of the \ac{DFT} is presented. The slices discussed here originate from the exact partitioning of \ac{DFT} space, under the projective \acl{DRT}, called the Discrete \acl{FST}. The method has a computational complexity of $O(n\log_2 n)$ (where $n = N^2$) and is constructed from a new Finite Ghost theory. This theory is also shown to unify several aspects of work done on Ghosts over the past three decades. The paper concludes with a significant application to fast, exact, non-iterative image reconstruction from sets of discrete slices obtained for a limited range of projection angles.
 \end{abstract}

 \begin{IEEEkeywords}
 Discrete Radon Transform, Mojette Transform, Discrete Tomography, Image Reconstruction, Discrete Fourier Slice Theorem, Ghosts, Number Theoretic Transform, Limited Angle, Finite Ghost Theory
 \end{IEEEkeywords}
\acresetall

\section{Introduction}\label{sec::Intro}
The \ac{DFT} is an important tool for inverse problems, where the Fourier representation of an object is used as a mechanism to recover that object. For example, in \ac{CT}, the internal structure of an object is recovered or reconstructed from its projected ``views'' or projections~\citep{Kak2001}. The \ac{FT} of the projections can be placed into Fourier space and the inverse \ac{FT} is used to reconstruct the object. This is especially advantageous given the efficient and low computationally complex algorithm of the \ac{DFT} known as the \ac{FFT}~\citep{Cooley1965}. In such cases, the acquired data cannot fully cover Fourier space as the problem is ill-posed~\citep{Logan1975, Smith1977}. Thus, there are missing Fourier coefficients and it is common practice to interpolate the space from the known Fourier coefficients. The choice of interpolation method is a major factor in determining the quality of the reconstruction~\citep{Gottleib2000, Walden2000}.

\subsection{Ghosts}
Incomplete Fourier coverage leads to the introduction of reconstruction artefacts known as ``Ghosts'' or ``invisible distributions''~\citep{Bracewell1954}. Ghosts are effectively formed from an under-determined set of projections, i.e. from the ``missing'' or unmeasured projections. Therefore, Ghosts are always present within \ac{CT} reconstructions and are the main reason for the filtering and interpolation of projection data~\citep{Logan1975}. The initial work on Ghosts in continuous space was pioneered by Bracewell and Roberts~\citep{Bracewell1954}, Logan~\citep{Logan1975}, Katz~\citep{Katz} and Louis~\citep{Louis1981}. Recent work by Cand\`{e}s~\emph{et al.}~\citep{Candes2006} has also addressed this issue using more modern methods.

In contrast, projection sets in \ac{DT}, which have practically well defined reconstruction processes~\citep{Herman1999}, are often deliberately or naturally under-determined for various applications. These applications range from image encoding~\citep{Normand2010}, network transmission~\citep{NormandEtAl} to tomography~\citep{Chandra2008}. Discrete Ghosts were first proposed by Katz~\citep{Katz} as a way to describe zero-valued discrete projections taken at rational angles $\theta_{pq}$, i.e. $\theta_{pq} = \tan^{-1}(\nicefrac{q}{p})$ (or simply the vector $[q,p]$) where $p,q \in \integerNumbers$. A simple example of these discrete Ghosts is shown in \figTag~\ref{fig::Ghosts}(a).
\begin{figure}[\placement]
 \centering
 \subfloat[]{
 \includegraphics[width=0.18\textwidth]{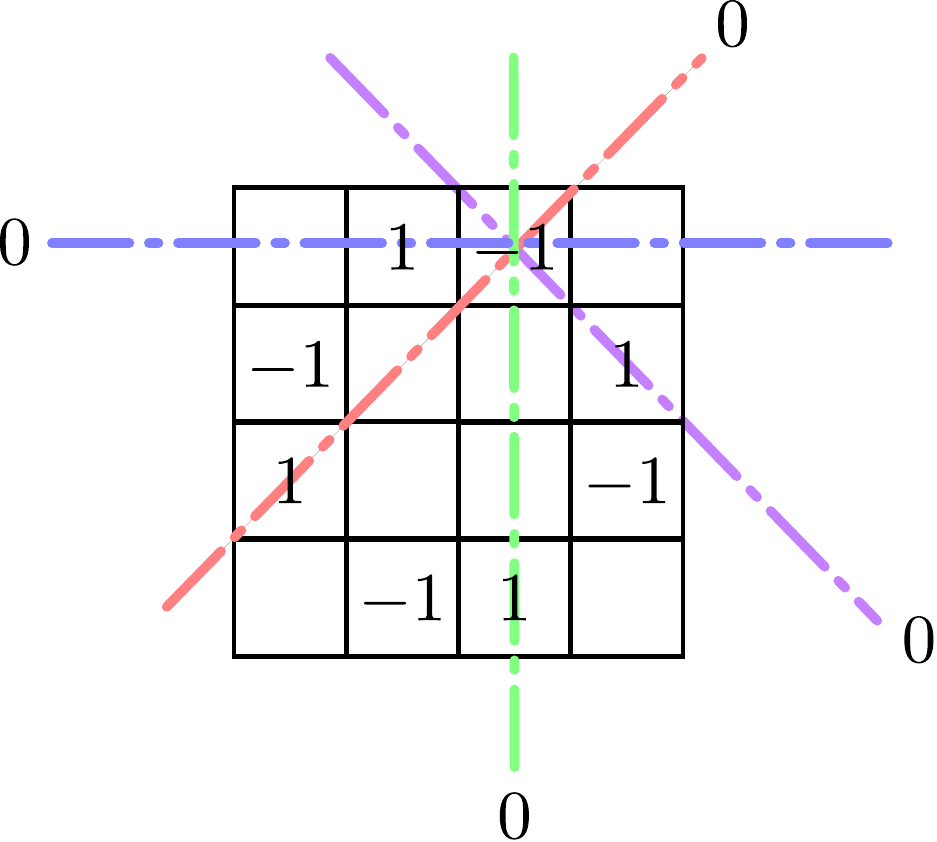}}
 \hspace{0.15cm}
 \subfloat[]{
 \includegraphics[width=0.19\textwidth]{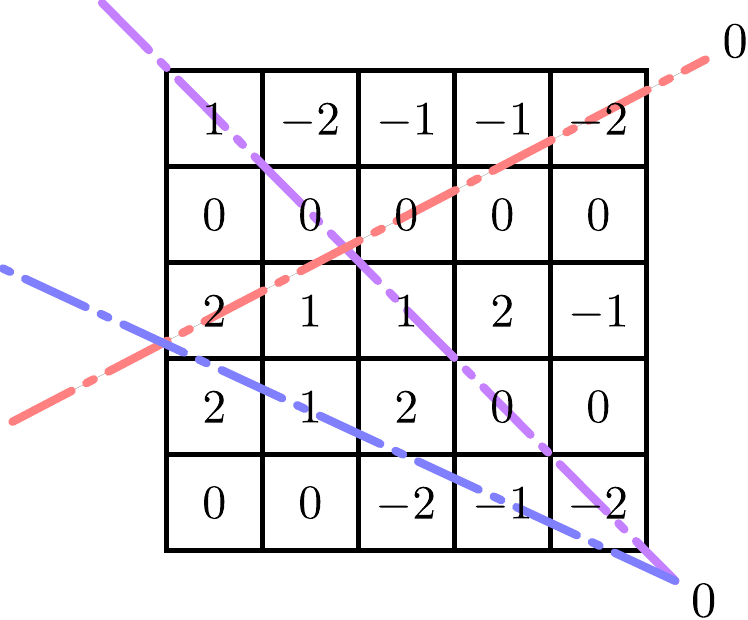}}
 \caption{An example of discrete and finite Ghosts. (a) shows a Ghost, as constructed by Katz~\citep{Katz}, which is invisible when projected at any of the four rational angles shown. (b) shows a finite (periodic) Ghost, which is invisible in three finite angles $(m=1,2,3)$ shown.}
 \label{fig::Ghosts}
\end{figure}
Katz~\citep{Katz} determined that an $\numberSymbol\times\numberSymbol$ image can be reconstructed exactly from a set of $\mu$ rational angle projections if and only if
\begin{eqnarray}
 \numberSymbol \leqslant 1 + \max\left(\sum_{\firstIndex=0}^{\mu-1} |p_\firstIndex| ,\ \sum_{\firstIndex=0}^{\mu-1} |q_\firstIndex| \right). \label{eqn::KatzCriterion}
\end{eqnarray}
This is now known as the Katz Criterion. It is a statement that the information contained in the projection set needs to be one-to-one with the image data. Thus, knowing whether the projections of a sub-region in the reconstruction meets the Katz Criterion will allow one to determine if the sub-region is exactly reconstructable.
\begin{figure}[\placement]
 \centering
 \subfloat[]{
 \includegraphics[width=0.21\textwidth]{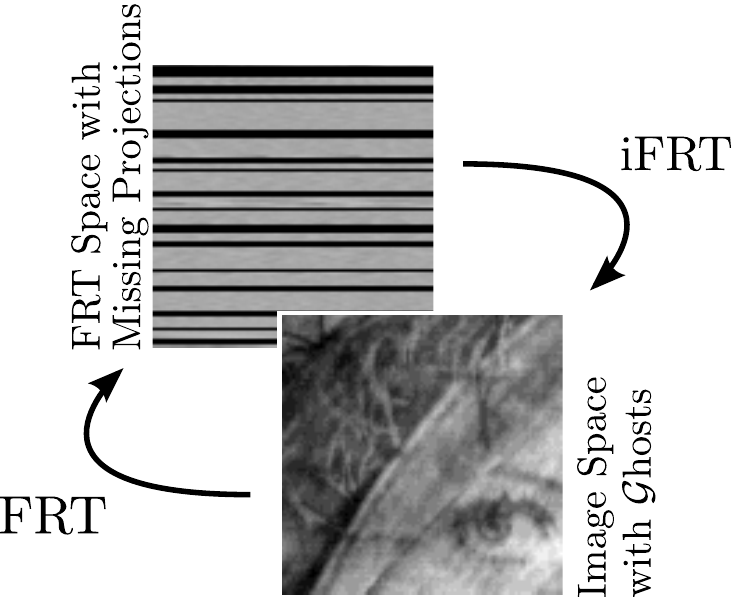}}
 \subfloat[]{
 \includegraphics[width=0.25\textwidth]{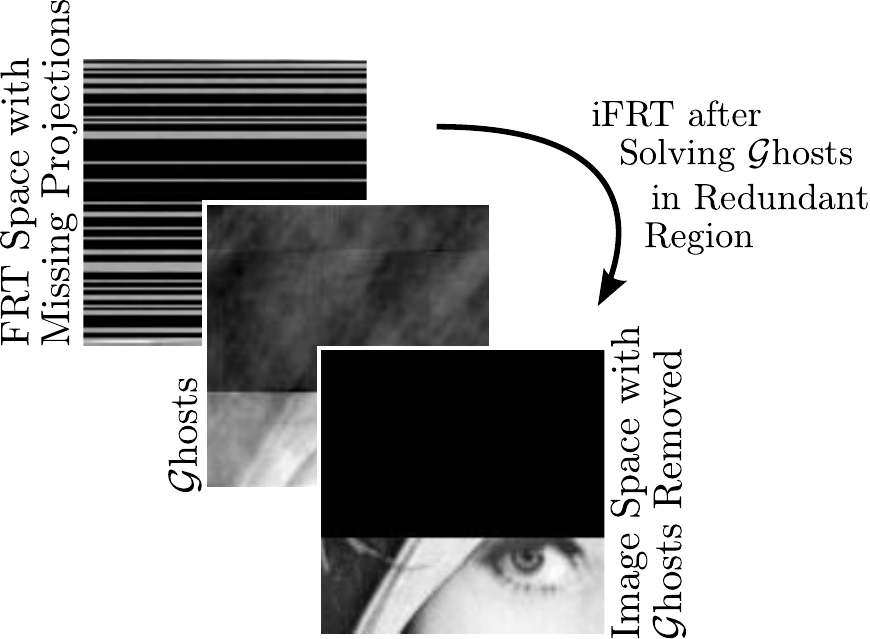}}
 \caption{An illustration of finite Ghosts formed within the \acl*{DRT}~(\acs*{DRT}). (a) shows missing projections (black rows in \acs*{DRT} space) and their effect on the reconstructed image. (b) shows the result of ``De-Ghosting'' in order to restore an image when a redundant image area is present.}
 \label{fig::FiniteGhosts}
\end{figure}

Chandra \emph{et al.}~\citep{Chandra2008, Chandra2008b} showed that Ghosts also exist in the \ac{DFT} as cyclic artefacts when \ac{DT} projection data is missing. These cyclic Ghosts will be referred to as ``Finite Ghosts'' (see \figTag~\ref{fig::Ghosts}(b)). They showed it was possible to recover missing projection data exactly in \ac{DT} given a sufficient redundant region in the reconstruction. However, the method could only solve for a very small number of missing projections while still being computationally viable. Their method was also based on empirical observations of the Finite Ghosts.
\begin{figure}[\placement]
 \centering
 \includegraphics[width=0.45\textwidth]{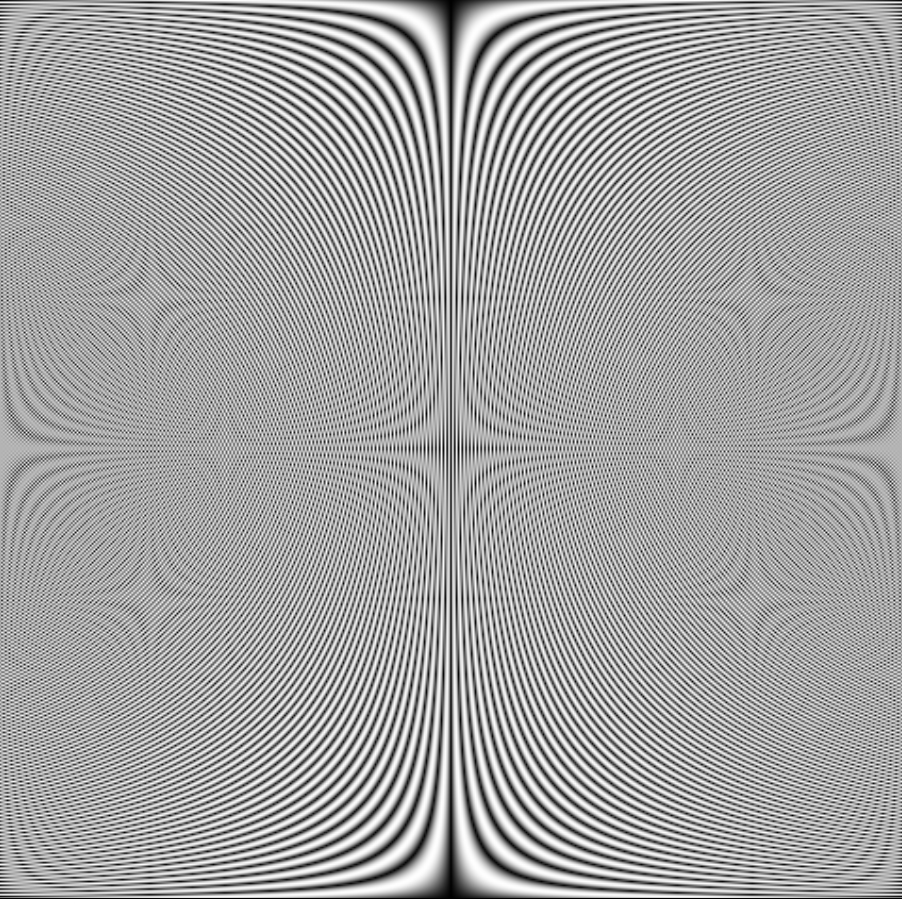}
 \caption{The Fourier eigenvalues of one dimensional circulant Ghost operators for $\numberSymbol=479$.}
 \label{fig::GhostOperators}
\end{figure}

In this paper, we present a fast method of recovering missing slices from \ac{DFT} space that may be applied efficiently to very large sets of missing projection data. The method allows exact recovery of these missing slices by removing Ghosts formed within the \ac{DFT}, a process that will be referred to as ``De-Ghosting''. The slices are equivalent to the projections of the \ac{DRT}. A schematic of this process is shown in \figTag~\ref{fig::FiniteGhosts} using the \ac{DRT}. The method will utilise a new theory of Ghosts in the \ac{DFT} constructed in \secTag~\ref{sec::FiniteGhosts}. A consequence of the theory are the special \ac{1D} Ghost operators shown in \figTag~\ref{fig::GhostOperators} from which the De-Ghosting is computed. The end result is a method to reconstruct an object exactly from a limited number of projections (with incomplete \ac{DFT} coverage) without interpolation (see \secTag~\ref{sec::Mojette}). The incomplete \ac{DFT} coverage in this case is restricted to the set of discrete slices of the \ac{DFT} as described in the next section. It is hoped that a deep understanding of such methods will allow more efficient approaches to inverse problems that utilise the \ac{DFT}.

\subsection{Discrete Fourier Slice Theorem}\label{sec::FST}
The \acf{DRT} provides an exact partitioning of \ac{DFT} space in the form of finite or cyclic slices~\citep[TIP]{Grigoryan2003}. An example of the slice partitioning is given in \figTag~\ref{fig::FST}.
\begin{figure}[\placement]
 \newcommand{\samplespacewidth}{0.19\textwidth}
 \newcommand{\samplehspace}{0.05cm}

 \centering
 \subfloat[]{
 \includegraphics[width=\samplespacewidth]{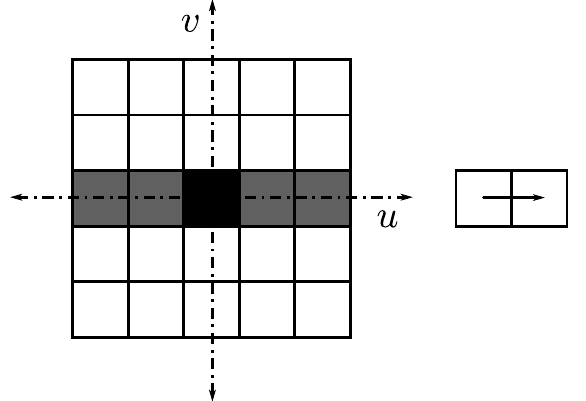}}
 \hspace{\samplehspace}
 \subfloat[]{
 \includegraphics[width=\samplespacewidth]{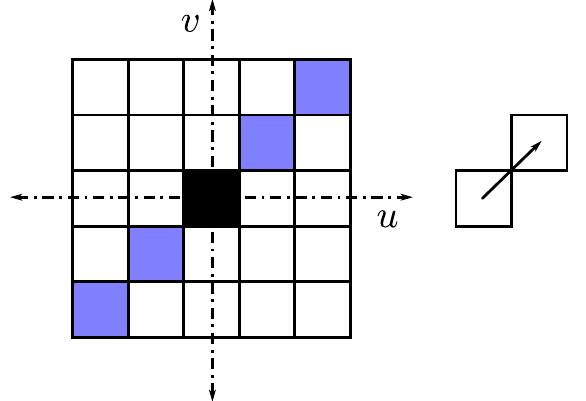}}
 
 \subfloat[]{
 \includegraphics[width=\samplespacewidth]{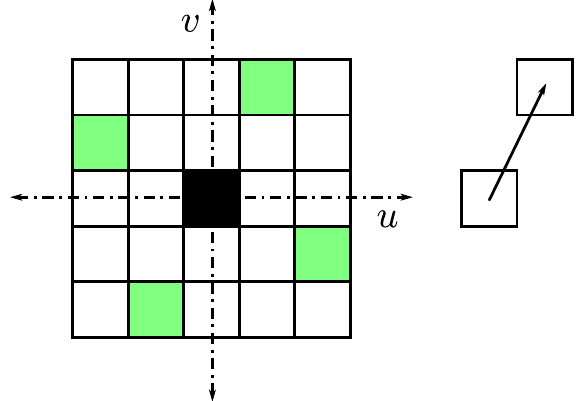}}
 \hspace{\samplehspace}
 \subfloat[]{
 \includegraphics[width=\samplespacewidth]{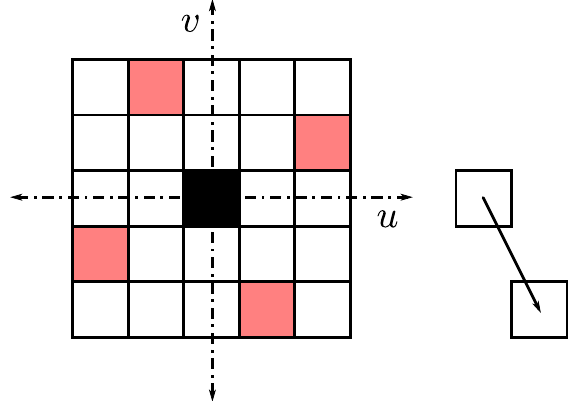}}

 \subfloat[]{
 \includegraphics[width=\samplespacewidth]{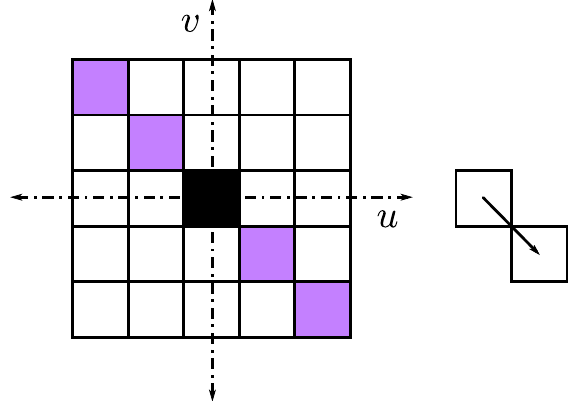}}
 \hspace{\samplehspace}
 \subfloat[]{
 \includegraphics[width=\samplespacewidth]{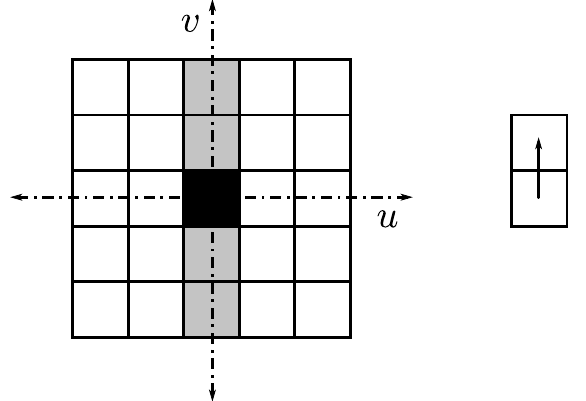}}

 \caption{The slices within the geometry of the \acf*{DFT} for a $5\times 5$ image. Each colour represents a slice of a different slope with the DC coefficient centred (black). Note that each vector shown is computed modulo $N$. (a)-(e) shows the slices with slopes $0 \leqslant m < 4 \imod 5$ in \acs*{DFT} space. (f) shows the row sum (perpendicular) slice in \acs*{DFT} space.}
 \label{fig::FST}
\end{figure}
The \ac{DRT} projections have the same geometry as the slices, i.e. they are taken as lines or congruences of the form
\begin{equation}
 y \equiv mx + t \imod N, \label{eqn::DiscreteLine}
\end{equation}
for an $\numberSymbol\times \numberSymbol$ image where $m$ is the slope, $t$ is the translate and $m, t, y, x \in \wholeNumbers$. The lines in this geometry wrap around the image in both rows and columns. 

In this paper, the term ``image'' is used interchangeably with a \ac{2D} ``object'' and assume that $\numberSymbol$ is prime, which defines the simplest form of the \ac{DRT}\footnote{See Hsung \emph{et al.}~\citep{Hsung1996} or Chandra \emph{et al.}~\citep{Chandra2010a} for a discussion of the dyadic \ac{DRT}.}. In general, one requires a total of $\numberSymbol + \nicefrac{\numberSymbol}{\primeSymbol}$ projections for an exact reconstruction for an $\numberSymbol\times \numberSymbol$ image, where $N=\primeSymbol^n$ (which includes powers of two). Thus, the projections are taken along the vectors $[1,m]$ (with $0 \leqslant m < N$) and $[0,1]$, and the slices are placed along the vectors $[-m,1]$ and $[1,0]$ in \ac{DFT} space (see examples shown as arrows in \figTag~\ref{fig::FST}). 

Chandra \emph{et al.}~\citep{Chandra2009, Chandra2010c} showed that these projections (and lines) are equivalent to circulant matrices (or simply circulants) for all translates $t$ (see \figTag~\ref{fig::Lines}).
\begin{definition}[Circulant~\citep{Davis1979}]
A circulant is an $N\times N$ matrix containing a unique row $f(\firstIndex)$ with $\firstIndex = 0,\ldots,N-1$ replicated on each row, but where each row is cyclically shifted $\!\!\imod N$ by a certain number of elements to the right.
\end{definition}
\begin{figure}[\placement]
 \centering
 \subfloat[]{
 \includegraphics[width=0.16\textwidth]{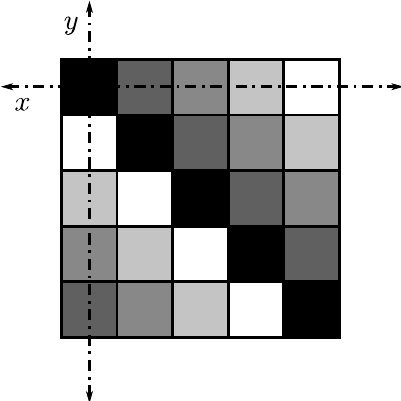}}
 \hspace{0.65cm}
 \subfloat[]{
 \includegraphics[width=0.18\textwidth]{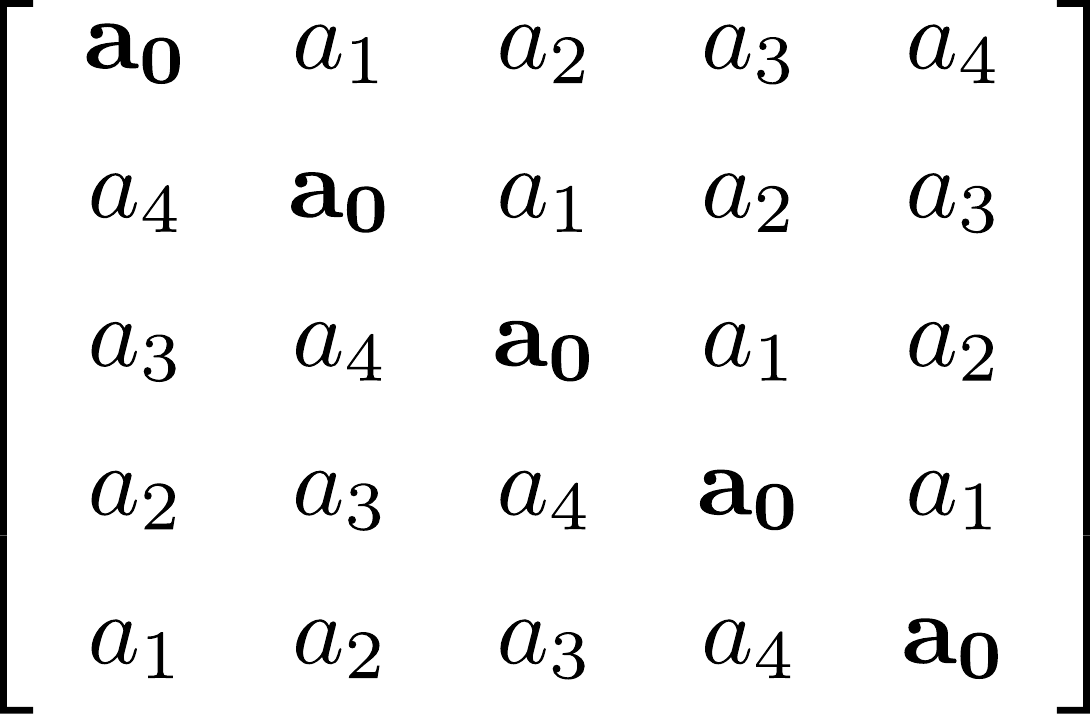}}
 \caption{The lines within the geometry of the \acl*{DFT} with slope 1 of a $5\times 5$ image. (a) shows the wrapping behaviour of the lines in an image lattice. Each grey scale shows pixels of the same translate. (b) shows the equivalent circulant matrix for the lines.}
 \label{fig::Lines}
\end{figure}
The property of circulants important here is that a circulant is diagonalised by the \ac{2D} \ac{DFT}. \eqnTag~\eqref{eqn::Diagonalise2} shows an example of this property, where the values $\lambda_\firstIndex$ are the eigenvalues of the circulant. 
\begin{eqnarray}
	\left[
	\begin{array}{ccccc}
	\bm{a_0} & a_1 & a_2 & a_3 & a_4\\
	a_3 & a_4 & \bm{a_0} & a_1 & a_2\\
	a_1 & a_2 & a_3 & a_4 & \bm{a_0}\\
	a_4 & \bm{a_0} & a_1 & a_2 & a_3\\
	a_2 & a_3 & a_4 & \bm{a_0} & a_1
	\end{array}
	\right] \quad\overset{\textrm{DFT/iDFT}}{\Longleftrightarrow}\quad
	\left[
	\begin{array}{ccccc}
	\lambda_0 &  &  &  &  \\ 
	 &  & \lambda_1 &  &  \\ 
	 &  &  &  & \lambda_2 \\ 
	 & \lambda_3 &  &  &  \\ 
	 &  &  & \lambda_4 & 
	\end{array}
	\right]\label{eqn::Diagonalise2}
\end{eqnarray}
Since each slice in the \ac{DFT} is obtained by diagonalising its corresponding circulant using the \ac{DFT} (see \eqnTag~\eqref{eqn::Diagonalise2}), the placing of each slice in \ac{DFT} space adds its corresponding circulant to image space until the image is recovered. This is known as \ac{CBP}~\citep{Chandra2010c} and an example of \ac{CBP} is given in \figTag~\ref{fig::CBP}. When all projections are diagonalised, the resulting slices fully tile all of \ac{DFT} space precisely once (see \figTag~\ref{fig::FST}) allowing the exact reconstruction of the object.
\begin{figure}[\placement]
 \centering
 \newcommand{\samplespacewidth}{0.12}
 \newcommand{\samplehspace}{0.25cm}

 \subfloat[1 Circulant]{
  \includegraphics[width=\samplespacewidth\textwidth]{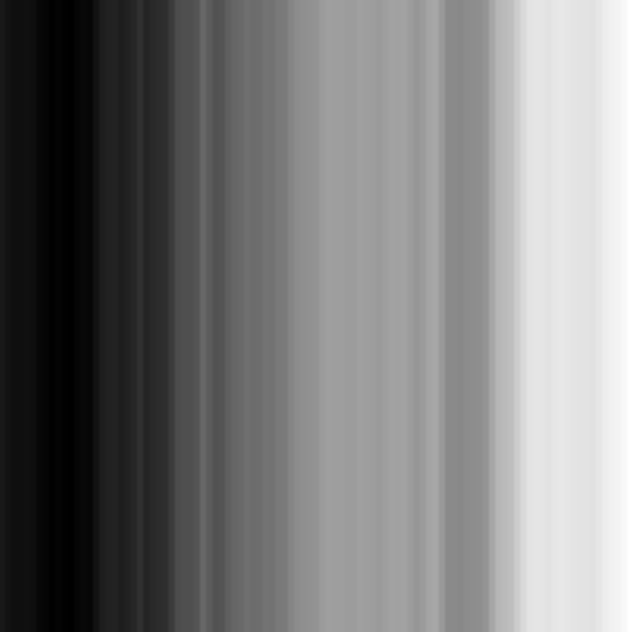}
 }
 \hspace{\samplehspace}
 \subfloat[20 Circulants]{
  \includegraphics[width=\samplespacewidth\textwidth]{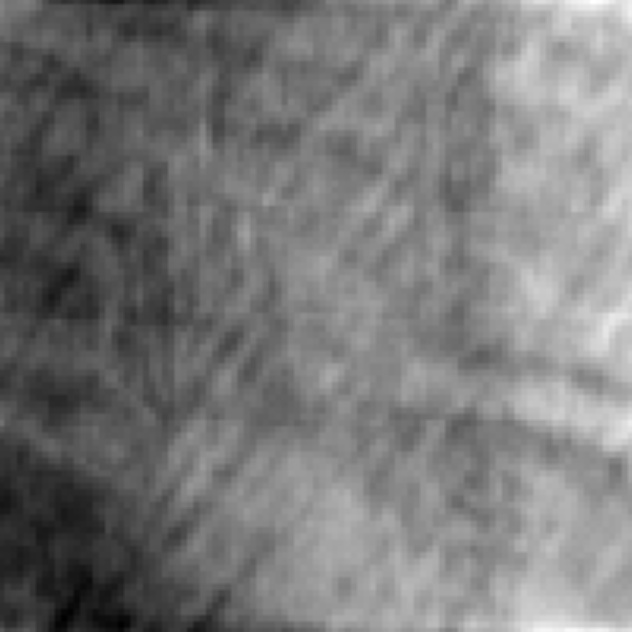}
 }
 \hspace{\samplehspace}
 \subfloat[40 Circulants]{
  \includegraphics[width=\samplespacewidth\textwidth]{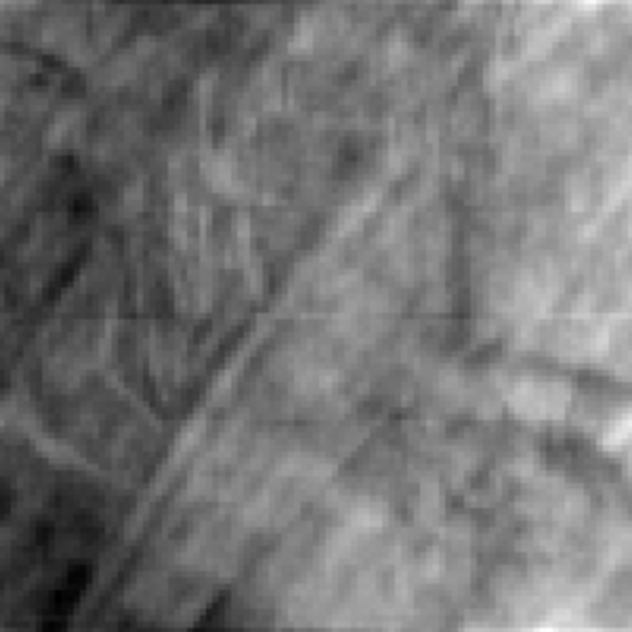}
 }
 
 \subfloat[60 Circulants]{
  \includegraphics[width=\samplespacewidth\textwidth]{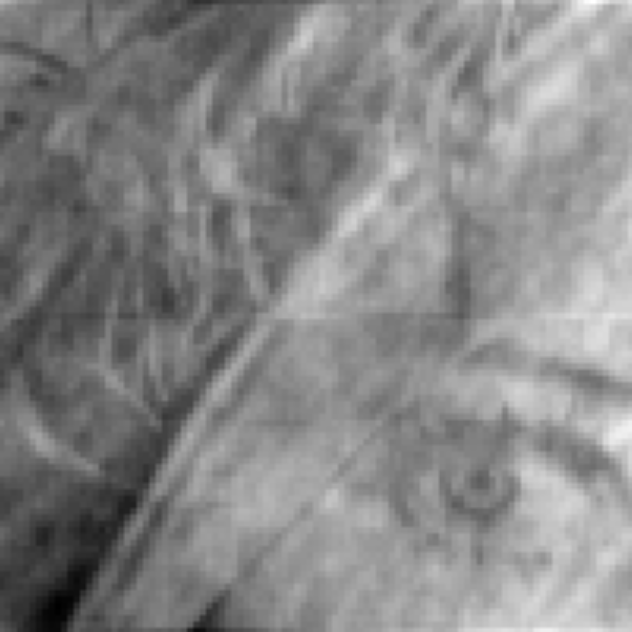}
 }
 \hspace{\samplehspace}
 \subfloat[80 Circulants]{
  \includegraphics[width=\samplespacewidth\textwidth]{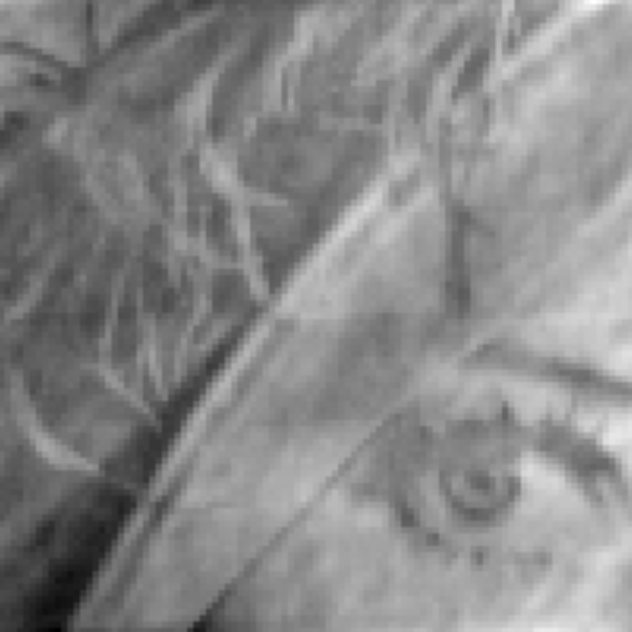}
 }
 \hspace{\samplehspace}
 \subfloat[100 Circulants]{
  \includegraphics[width=\samplespacewidth\textwidth]{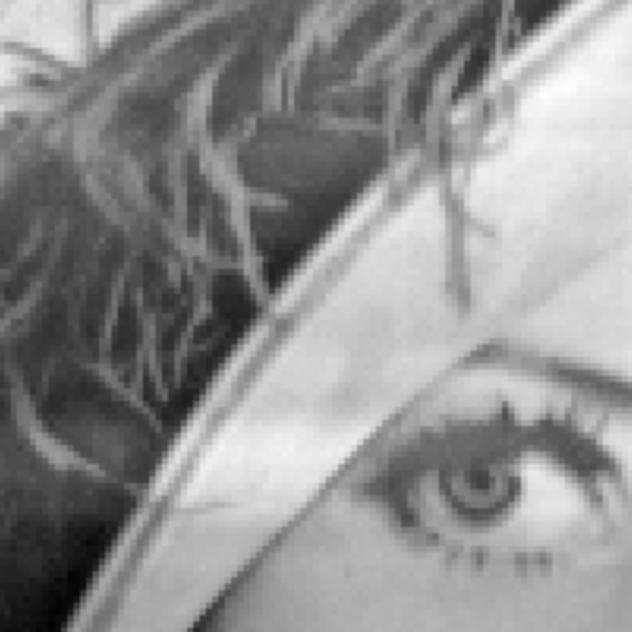}
 }
 \caption{Illustration of the \acl*{CBP} of a $101\times 101$ image of Lena using different number of circulants. (a)-(f) show the effect of an increasing number of slices placed into \acl*{DFT} space on the image. The artefacts on the partial reconstructions are Finite Ghosts, which are the topic of \secTag~\ref{sec::FiniteGhosts}. Using all (102) projections results in a perfect reconstruction of Lena.}
 \label{fig::CBP}
\end{figure}
This is made possible by the prime-sized space of the \ac{DFT}/image, since the $\gcd(m,N) = 1$ always.

The \ac{DRT} was independently developed by Grigoryan~\citep{Grigoryan1986}, Bolker~\citep{Bolker1987}, Gertner~\citep{Gertner1988} and Fill~\citep{Fill1989}. The partitioning allows for the exact $O(n\log_2 n)$ (with $n=N^2$) recovery of images from their projections and is known as the Discrete \ac{FST}~\citep{Grigoryan1986, MatusFlusser, Grigoryan2003}. Do and Vetterli~\citep[TIP]{Do2003} utilised the Wavelet Transform on the \ac{DRT} projections for image compression and de-noising. Chandra~\citep{Chandra2010c} extended the \ac{DRT} to the \ac{NTT} using the circulant theory of the \ac{DRT} and showed that the discrete \ac{FST} still applies. This \ac{NRT} will prove critical in the De-Ghosting process later on in this work. 

The \ac{NRT} methods will rely on initial work done by Normand \emph{et al.}~\citep{NormandEtAl}. They generalised the Katz Criterion to arbitrary convex regions by arguing that a region is reconstructable if and only if the size of the Ghost becomes too large to fit within the image region. To show this, they studied the spatial extents of discrete Ghosts, which were determined using Mathematical Morphology. 

\subsection{Ghost Morphology}
Ghost structures are constructed using dilations of \acp{2PSE}. A discrete \ac{2PSE} is the rational vector $[q,p]$ as shown in part (a) and (b) of \figTag~\ref{fig::Dilation}. 
\begin{figure}[\placement]
 \centering
 \subfloat[]{
 \includegraphics[width=0.13\textwidth]{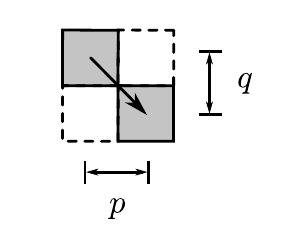}}
 \hspace{0.05cm}
 \subfloat[]{
 \includegraphics[width=0.13\textwidth]{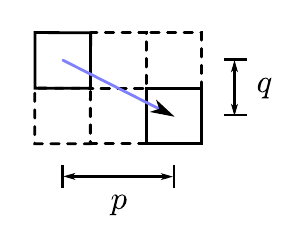}}
 \hspace{0.05cm}
 \subfloat[]{
 \includegraphics[width=0.13\textwidth]{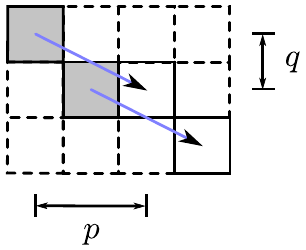}}
 \caption{An example of Minkowski Addition or Morphological Dilation $\MinkowskiSymbol$. (a) and (b) shows \acp{2PSE} based on the vectors $[q, p]$. (c) shows the construction of a parallelogram from the Minkowski addition of (a) with (b).}
 \label{fig::Dilation}
\end{figure}
A dilation by a \ac{2PSE} is the addition of the \ac{2PSE} to every point in the current structure~\citep{Soille2003}. Equivalently, the dilation of $A$ by $B$ is given by the Minkowski Addition of $A$ and $B$ as
\begin{equation}
 A \MinkowskiSymbol B = \{a+b:a\in A,\ b\in B\}.\label{eqn::MinkowskiAddition}
\end{equation}
An example of a dilation is shown in \figTag~\ref{fig::Dilation}. For constructing Ghost structures, each missing projection has a \ac{2PSE} given by their rational angle or vector $[q, p]$, so that $B$ in \eqnTag~\eqref{eqn::MinkowskiAddition} is a \ac{2PSE}. The dilations are then applied using all \acp{2PSE} of the missing projection angles to construct a Ghost structure. 

In the next section, the theory of Finite Ghosts is constructed that describes the artefacts caused by missing slices within the \ac{DFT}. The artefacts formed within the redundant region of a reconstruction can then be used to recover the missing slices.

\section{Finite Ghosts}\label{sec::FiniteGhosts}
The approach of Normand \emph{et al.}~\citep{NormandEtAl} is only adequate when one is concerned with the spatial extent of the Ghosts, such as in defining the reconstruction criterion. When one is also concerned about the values of the Ghosts, the dilation operation must be coupled with a negation of the signs within the Ghost structures. The negation ensures that the Ghost is also zero-valued in the direction of the \ac{2PSE} (see \figTag~\ref{fig::Kernel}). 
\begin{figure}[\placement]
 \centering
 \subfloat[]{
 \includegraphics[width=0.13\textwidth]{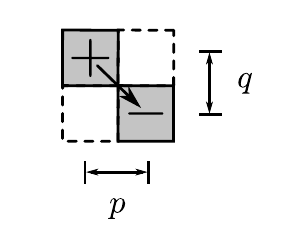}}
 \hspace{0.05cm}
 \subfloat[]{
 \includegraphics[width=0.13\textwidth]{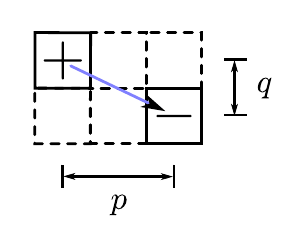}}
 \hspace{0.05cm}
 \subfloat[]{
 \includegraphics[width=0.13\textwidth]{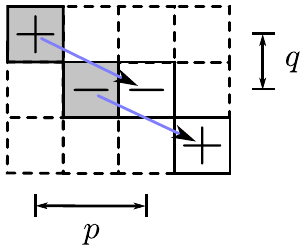}}
 \caption{An example of Ghost convolution kernels as an extension of \figTag~\ref{fig::Dilation}. (a) and (b) shows kernels based on the vectors $[1,m_\firstIndex]$ for each of the missing projections. (c) shows the construction of a parallelogram from the \acs*{2D} convolution of (a) with (b).}
 \label{fig::Kernel}
\end{figure}
This allows Ghost construction to become a convenient \ac{2D} convolution. In the case of the \ac{DFT}, the resulting operation is a \ac{2D} cyclic convolution and the discrete Ghosts are cyclic (or finite) also~\citep{Chandra2008b}.
\begin{theorem}[Finite Ghosts]\label{thm::Ghosts}
Each missing projection $\mat{a}$ at $[1,m_a]$ in the \ac{DRT}, which corresponds to a missing slice in the \ac{DFT}, forms artefacts superimposed on the reconstructed image in the form of a circulant. The unique row of the circulant is $-\mat{a}$. These artefacts are called Finite Ghosts.
\end{theorem}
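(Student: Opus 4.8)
\emph{Proof proposal.} The plan is to exploit the linearity of Circulant Back-Projection together with the diagonalisation property of \eqnTag~\eqref{eqn::Diagonalise2}. First I would write down the reconstruction identity. By the discrete Fourier Slice Theorem the slices tile the DFT exactly once, so the true image is the superposition, over every projection, of the circulant produced by depositing that projection's slice into DFT space and applying the inverse 2D DFT. Writing $\mat{C}_m$ for the circulant contributed by the projection of slope $m$, this reads $\mathrm{image}=\sum_m \mat{C}_m$; omitting the slice of the missing projection $\mat{a}$ (at slope $m_a$) therefore leaves the partial reconstruction $\sum_{m\neq m_a}\mat{C}_m = \mathrm{image}-\mat{C}_{m_a}$. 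Hence the artefact superimposed on the reconstruction is precisely $-\mat{C}_{m_a}$, which is a circulant, and the problem reduces to identifying its unique row.

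Second, I would identify that unique row. Reading \eqnTag~\eqref{eqn::Diagonalise2} in reverse, a circulant is recovered from its slice by the inverse 2D DFT, and its unique row is the inverse 1D DFT of the eigenvalue sequence carried by the slice. The discrete Fourier Slice Theorem says exactly that this slice is the 1D DFT of the projection $\mat{a}$, so composing the forward and inverse 1D transforms returns $\mat{a}$ unchanged. Thus the unique row of $\mat{C}_{m_a}$ is $\mat{a}$, and the unique row of the artefact $-\mat{C}_{m_a}$ is $-\mat{a}$, as claimed. The minus sign is forced by linearity — a missing projection subtracts its circulant — and it is precisely the sign negation of the convolution kernels of \figTag~\ref{fig::Kernel}.

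The hard part will be making the second step airtight: I must verify that the eigenvalues of $\mat{C}_{m_a}$ are deposited along the correct slice direction $[-m_a,1]$ and in the order dictated by the shift pattern of \eqnTag~\eqref{eqn::Diagonalise2}, so that the inverse 1D DFT taken along that slice returns the projection in its natural translate order rather than a permuted or reflected copy. Because $N$ is prime, $\gcd(m_a,N)=1$, so the slice runs through a complete residue system and this indexing is a bijection; what remains is bookkeeping of indices modulo $N$ to confirm that no nontrivial reindexing of $\mat{a}$ survives the round trip. I would settle this by tracking a single, unambiguous entry — say the DC term, shown in bold in \eqnTag~\eqref{eqn::Diagonalise2}, whose eigenvalue is the line sum $\sum_j a_j$ — through the transform pair and invoking coprimality to conclude the map on the translate index is the identity.
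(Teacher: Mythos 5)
Your proposal is correct and takes essentially the same route as the paper: both arguments rest on the linearity of \ac{CBP}, so that omitting a projection's slice subtracts its circulant contribution from the reconstruction, leaving an artefact equal to the negative of that circulant. Your extra step identifying the unique row as $-\mat{a}$ via the Fourier-slice round trip (and the coprimality bookkeeping) only makes explicit what the paper delegates to the cited equivalence between \ac{DRT} projections and circulants.
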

\begin{proof}
When projections are missing, the \ac{CBP} is incomplete since there must be $\numberSymbol + \nicefrac{\numberSymbol}{\primeSymbol}$ projections for an exact reconstruction. Each projection is equivalent to a circulant in the reconstruction process. Thus, the remaining missing contributions must be a superposition of $\ghostNumberSymbol$ number of circulants with shifts $m_\firstIndex$, where $\firstIndex = 0,\ldots,\ghostNumberSymbol-1$ and $\ghostNumberSymbol$ represents the number of missing projections. The Ghosts are negative-valued since they are missing contributions in the reconstruction.
\end{proof}
These Ghosts can be seen in \figTag~\ref{fig::CBP} as slices are placed into \ac{DFT} space. A schematic of the circulant form of the these Ghosts is shown in \figTag~\ref{fig::Ghosts_Zoomed}. 
\begin{figure}[\placement]
 \centering
 \includegraphics[width=0.4\textwidth]{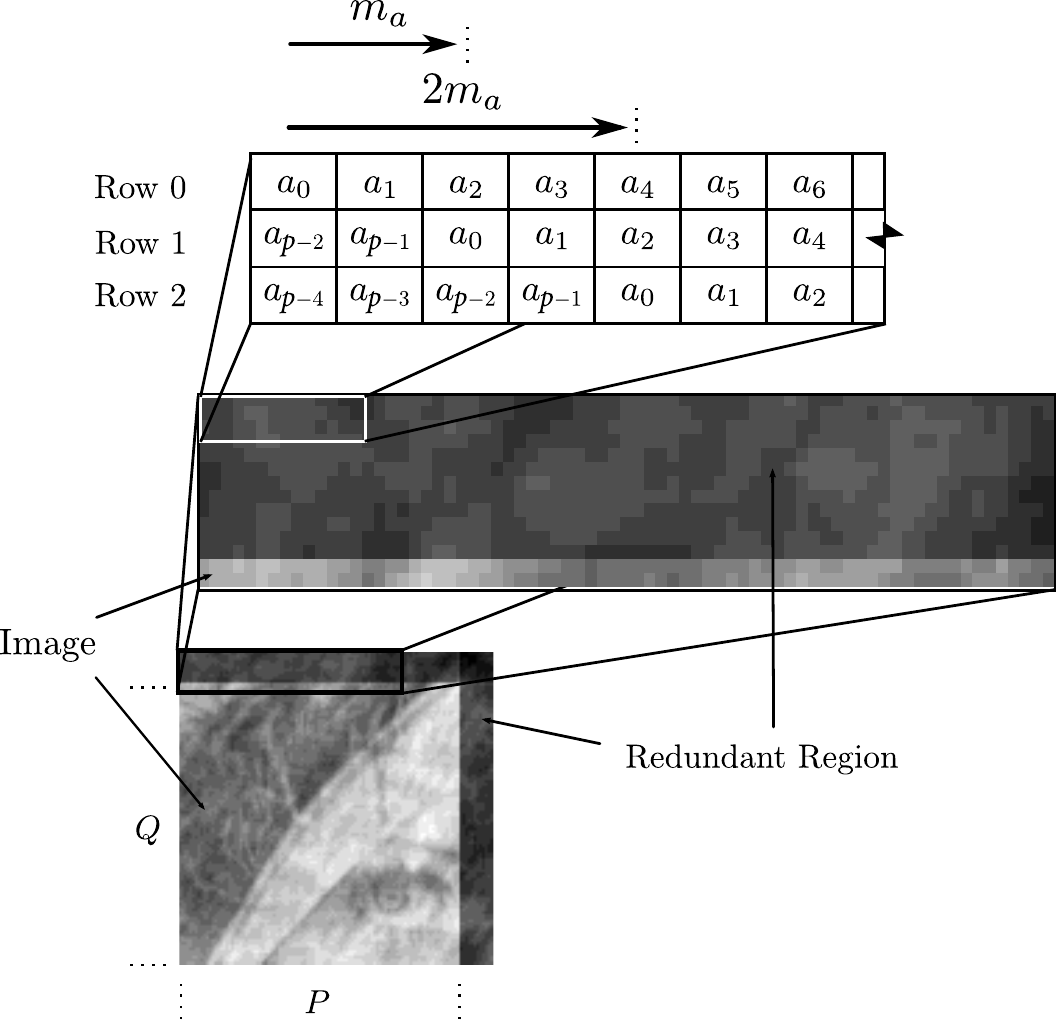}
 \caption{Ghosts in image space when missing projections are present in the \acl*{DRT}. The $Q\times P$ image (Lena) is of $100\times 100$ pixels embedded in a image space $\numberSymbol=113$. Assuming one missing projection $\mat{a}$ at $m_a=2$, then the table shows the circulant nature of the Ghost artefacts that become embedded over the reconstructed image values.}
 \label{fig::Ghosts_Zoomed}
\end{figure}
The circulants effectively represent a \ac{2D} convolution of the missing projection based on the vectors $[1,m]$ of the \ac{DRT}. In the next few sections, a method for defining this convolution is made and then a method for de-convolving or De-Ghosting these Ghosts is constructed.

\subsection{Ghost Convolution}
Let the \ac{2PSE} of the \ac{DRT} projections be defined as $[1,m_\firstIndex]$ for each of the $\firstIndex$ projections. 
\begin{theorem}[Ghost~\acp{2PSE}]\label{thm::Ghost2PSEs}
 The \acp{2PSE} of the Finite Ghosts are $[1,m_\firstIndex]$ for each of the missing projections $\firstIndex = 0,\ldots,\ghostNumberSymbol-1$.
\end{theorem}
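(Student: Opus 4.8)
The plan is to derive the claim directly from Proposition~\ref{thm::Ghosts} together with the line geometry of the \ac{DRT}. By Proposition~\ref{thm::Ghosts}, each missing projection $\mat{a}$ at $[1,m_\firstIndex]$ contributes a circulant Ghost, with cyclic shift $m_\firstIndex$ and unique row $-\mat{a}$, whose support is assembled from the discrete lines $y \equiv m_\firstIndex x + t \imod N$ of that projection. First I would pin down the elementary displacement that generates these lines: substituting the two pixels $(x,y)$ and $(x+1,\,y+m_\firstIndex)$ into the congruence gives $t = y - m_\firstIndex x$ in both cases, so the two points share the same translate and the minimal step along a slope-$m_\firstIndex$ line is exactly the vector $[1,m_\firstIndex]$. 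Consequently any two pixels on a common line of that projection differ by an integer multiple of $[1,m_\firstIndex] \imod N$.

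Next I would connect this generating vector to the morphological description. A \ac{2PSE} is by definition a displacement vector $[q,p]$, so the previous step already identifies $[1,m_\firstIndex]$ as the structuring element aligned with missing projection $\firstIndex$. To obtain the value-level statement rather than only the support, I would invoke the sign negation that accompanies the dilation (the construction preceding \figTag~\ref{fig::Kernel}): placing values of opposite sign at the two endpoints of $[1,m_\firstIndex]$ makes the pair cancel inside the single projection bin $t$ they occupy, so the elementary kernel is zero-valued in the direction of its own \ac{2PSE}. This elementary convolution carries the same cyclic shift $m_\firstIndex$ as the circulant of Proposition~\ref{thm::Ghosts}, so the Ghost of missing projection $\firstIndex$ and the \ac{2PSE} $[1,m_\firstIndex]$ share a single generating direction.

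I would then range over $\firstIndex = 0,\ldots,\ghostNumberSymbol-1$. Since Proposition~\ref{thm::Ghosts} expresses the total artefact as a superposition of the $\ghostNumberSymbol$ circulants, one per missing projection, and each such circulant has just been shown to arise from the \ac{2PSE} $[1,m_\firstIndex]$, the set of structuring elements for the Finite Ghosts is exactly $\{[1,m_\firstIndex] : \firstIndex = 0,\ldots,\ghostNumberSymbol-1\}$, as claimed. The composite parallelogram-type kernels of \figTag~\ref{fig::Kernel} are then nothing more than the \ac{2D} cyclic convolution of these individual \acp{2PSE}.

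I expect the main obstacle to be the bookkeeping of orientation and sign conventions rather than any deep argument. One must be careful to distinguish the image-space projection vector $[1,m_\firstIndex]$ from the conjugate \ac{DFT} slice vector $[-m_\firstIndex,1]$, and to check that the negation used for the value-aware construction maps consistently onto the ``unique row $-\mat{a}$'' of Proposition~\ref{thm::Ghosts}. Making this algebraic-to-morphological bridge precise — that is, verifying that the circulant and the \ac{2PSE}-based kernel describe the same operation $\imod N$ — is the only step that requires genuine care.
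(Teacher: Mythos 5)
Your proposal is correct and takes essentially the same route as the paper: Proposition~\ref{thm::Ghosts} gives a circulant shifted by $m_\firstIndex$ on successive rows, and identifying that row-to-row shift with the displacement vector $[1,m_\firstIndex]$ (equivalently, Minkowski addition of each row by $[1,m_\firstIndex]$) is precisely the paper's one-line argument, which your congruence substitution merely makes explicit. The only divergence is that your sign-negation and value-cancellation discussion is not needed for this statement --- it is the content of the subsequent Proposition~\ref{thm::GhostKernels} --- though including it does no harm.
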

\begin{proof}
 From~\propTag~\ref{thm::Ghosts}, each missing projection is a circulant shifted by $m_\firstIndex$ and hence each artefact is shifted by $m_\firstIndex$ on subsequent rows. Therefore, this is equivalent to the Minkowski Addition of each row by $[1,m_\firstIndex]$.
\end{proof}
\begin{theorem}[Ghost Kernels]\label{thm::GhostKernels}
 The Ghost convolution kernels are the \acp{2PSE} $[1,m_\firstIndex]$, which sum to zero along its vector for each of the missing projections $\firstIndex = 0,\ldots,\ghostNumberSymbol-1$.
\end{theorem}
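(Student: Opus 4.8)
The plan is to pin down the \ac{2PSE}-based \emph{support} of each kernel from \propTag~\ref{thm::Ghost2PSEs} and then to fix the value carried at each of its two support points by imposing the single property that makes a structure a Ghost: that it be invisible to (i.e.\ project to zero under) the projection at its associated slope $m_\firstIndex$. Thus \propTag~\ref{thm::Ghost2PSEs} supplies the geometry of the kernel, and the remaining content of the statement is to show that the correct weights on that geometry are exactly those that sum to zero along the vector $[1,m_\firstIndex]$.

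First I would recall, as noted at the start of this section, that passing from a Minkowski dilation to a value-aware convolution requires negating signs so that the structure vanishes along the direction of each \ac{2PSE}; the aim is to \emph{derive} this negation rather than assume it. Fix one missing projection of slope $m_\firstIndex$ and write its kernel as the two points of the \ac{2PSE} $[1,m_\firstIndex]$, namely the origin and the endpoint $(1,m_\firstIndex)$ reached by the vector, carrying as-yet-unknown weights. The key step is to observe that, under the \ac{DRT} congruence $\ySymbol \equiv m_\firstIndex \xSymbol + t \imod \numberSymbol$ of \eqnTag~\eqref{eqn::DiscreteLine}, both points give $t \equiv 0$ and hence lie on one and the same projection bin at slope $m_\firstIndex$. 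Consequently the line sum of the kernel along $[1,m_\firstIndex]$ equals the sum of the two weights, and requiring this to vanish (the defining invisibility of a Ghost in that direction) forces the weights to be equal and opposite; normalising gives $+1$ at the origin and $-1$ at the endpoint, precisely the negation anticipated above and consistent with the $-\mat{a}$ row of \propTag~\ref{thm::Ghosts}. This is exactly the assertion that the kernel is the \ac{2PSE} $[1,m_\firstIndex]$ summing to zero along its vector.

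To close the argument I would note that this per-kernel condition is also sufficient for the Ghost assembled from all $\ghostNumberSymbol$ missing projections to be invisible in every missing direction: since the \ac{DRT} projection and the cyclic convolution commute, the projection at slope $m_\secondIndex$ of the full convolution contains the zero-summing factor contributed by the $\secondIndex$-th kernel and therefore vanishes. I expect the main obstacle to be the modular-geometry bookkeeping in the key step, namely verifying carefully, with the coordinate convention of \eqnTag~\eqref{eqn::DiscreteLine} and the wrap-around modulo $\numberSymbol$, that the two endpoints of the \ac{2PSE} are genuinely collinear (share the translate $t$) at the matching slope, and that this collinearity singles out $m_\firstIndex$ among all slopes. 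The primality of $\numberSymbol$, which guarantees $\gcd(m_\firstIndex,\numberSymbol)=1$ and hence well-defined discrete lines, keeps this step clean; it then only remains to note that the zero-sum constraint is a single linear condition on the two weights, leaving the one-parameter freedom removed by the $\pm 1$ normalisation.
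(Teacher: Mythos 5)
Your proof is correct, but it is organised rather differently from the paper's. The paper's own proof is a terse, top-down \emph{necessity} argument: the assembled Ghost is zero-valued along each missing finite angle by construction (citing \propTag~\ref{thm::Ghost2PSEs}), and since the \acp{2PSE} apply to every point of the structure, each \ac{2PSE} must itself be a Ghost at its own angle --- nothing further is verified. You instead work bottom-up: you check explicitly, in the coordinates of \eqnTag~\eqref{eqn::DiscreteLine}, that the origin and the endpoint $(1,m_\firstIndex)$ share the translate $t\equiv 0$ at slope $m_\firstIndex$ (and only at that slope), so that invisibility at that slope is a single linear condition on the two weights, forcing the equal-and-opposite $\pm 1$ normalisation; you then add the \emph{sufficiency} direction, that kernels with this property assemble by convolution into a structure invisible at every missing angle because projection commutes with cyclic convolution. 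That last step is essentially the \ac{PCT}, which the paper only introduces in the following subsection and never invokes inside this proof, and your collinearity computation is essentially what the paper defers to \propTag~\ref{thm::KernelProjections}. The trade-off: your version makes the proposition self-contained and two-sided (the zero-sum weights are both forced and sufficient), at the cost of quietly pulling forward two later ingredients; the paper's version is shorter but leaves both the modular collinearity and the sufficiency implicit. One small point of care in your closing argument: what makes the convolution vanish at slope $m_\secondIndex$ is not merely that the $\secondIndex$-th kernel's projection sums to zero, but that it is the identically zero vector (both weights fall into the same bin $t=0$ and cancel) --- your collinearity step does establish exactly this, so the argument stands, but the phrase ``zero-summing factor'' should be read in that stronger sense.
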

\begin{proof}
The projections of Ghosts are zero-valued at certain finite angles $[1,m_\firstIndex]$ by construction (see \propTag~\ref{thm::Ghost2PSEs}), so the \acp{2PSE} must also be a Ghost at its finite angle, since they apply to each point in the image/structure. These then become Ghost convolution kernels (see \eqnTag~\eqref{eqn::MinkowskiAddition}). An example of Ghost kernels can be seen in \figTag~\ref{fig::Kernel}. 
\end{proof}

Then the Ghost kernels can be used in an initial method to construct Finite Ghosts by computing the \ac{2D} convolution of the Ghost kernels as follows:
\begin{enumerate}
 \item Compute the \ac{DFT} of the initial structure and each of the \acp{2PSE}.
 \item Multiply the coefficients of each \ac{2PSE}, together with the coefficients of the structure, in \ac{DFT} space to compute the \ac{2D} convolution.\label{enum::Convolve}
 \item Compute the inverse \ac{DFT} to get the Ghosts in image space.
\end{enumerate}
The method has a computational complexity of $O(\ghostNumberSymbol\numberSymbol^2\log_2 \numberSymbol)$ for computing the \ac{2D} \acp{DFT} and $O(\ghostNumberSymbol\numberSymbol^2)$ for computing the convolution, where $\ghostNumberSymbol$ denotes the number of Ghosts. An example of the \ac{2D} convolution method is given in \figTag~\ref{fig::Kernel}. The resulting Ghost will be invisible for those projections at the finite angles used to construct them.

However, this \ac{2D} convolution approach is inefficient because it does not take into account the following:
\begin{enumerate}
 \item The Ghost grows a row at a time and, in most cases, does not take up the full \ac{2D} space. Thus, a series of full \ac{2D} convolutions can be expensive.
 \item The \ac{DRT} space contains zero valued projections and when $\ghostNumberSymbol$ is close to $\numberSymbol$, this space is mostly empty.
\end{enumerate}
Is there a way to make the process more efficient? Yes, the answer is to use the \acl{PCT} of the \ac{DRT}.

\subsection{Projection Convolution Theorem}
The \ac{PCT} states that a \ac{2D} convolution is equivalent to the \ac{1D} convolution of each projection or slice in Fourier space. Thus, the \ac{2D} Ghost convolution can be computed as a series of \ac{1D} convolutions on the known projections. For Ghost convolution, this is particularly useful when the number of Ghosts $\ghostNumberSymbol$ is close to $\numberSymbol$, so that the number of known projections is small. Therefore, \ac{DRT} space is sparse and the \ac{2D} convolution becomes a relatively small number of \ac{1D} operations.

The \ac{PCT} can be interpreted as a consequence of the discrete \ac{FST}. Since the slices tile \ac{DFT} space exactly and that slices are the \ac{DFT} of the projections, cyclically convolving the slices of two objects is equivalent to cyclically convolving the objects themselves. Thus, in order to utilise the \ac{PCT}, one needs to know the projections of the two objects. In this case, the projections of the Ghost kernels are particularly convenient, which makes a \ac{PCT} approach very efficient.
\begin{theorem}[Kernel Projections]\label{thm::KernelProjections}
The projections of the \ac{2D} convolution kernel $[1,m_\ghostSymbol]$, with a positive term at the origin and a negative term at the coordinate $(1,m_\ghostSymbol)$, will have the positive term at the zeroth translate and the negative term at translate $t_\firstIndex$ as
\begin{equation}
t_\firstIndex = \left(m_\ghostSymbol - m_\firstIndex\right) \imod\numberSymbol\label{eqn::Kernels}
\end{equation}
for each projection $m_\firstIndex$ with $\firstIndex = 0,\ldots,\numberSymbol-1$ and $\ghostSymbol = 0,\ldots,\numberSymbol-1$. For the $\firstIndex = \numberSymbol$ projection, negative term is always at $t_\firstIndex = 1$. Note that negative values $-x$ modulo $\numberSymbol$ are equivalent to the value $\numberSymbol-x \imod \numberSymbol$.
\end{theorem}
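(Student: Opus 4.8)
The plan is to obtain the two projected translates by directly applying the line geometry of the \ac{DRT} to the two nonzero entries of the kernel, since $[1,m_\ghostSymbol]$ is supported only at the origin $(0,0)$ and at $(1,m_\ghostSymbol)$. Recall from \eqnTag~\eqref{eqn::DiscreteLine} that the projection of slope $m_\firstIndex$ sorts the image point with first coordinate $x$ and second coordinate $y$ into the translate bin $t=(y-m_\firstIndex x)\imod\numberSymbol$, and that this is a genuine partition because $\numberSymbol$ is prime, so $\gcd(m_\firstIndex,\numberSymbol)=1$. Consequently the projection of the kernel is supported on at most two bins: the one receiving the $+1$ at the origin and the one receiving the $-1$ at $(1,m_\ghostSymbol)$. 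Everything therefore reduces to evaluating the bin formula at these two points.

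First I would treat the $\numberSymbol$ slant projections $m_\firstIndex$, $\firstIndex=0,\ldots,\numberSymbol-1$. Substituting the origin $x=y=0$ gives $t=0$ for every slope, so the positive term always sits at the zeroth translate. Substituting the negative entry $(x,y)=(1,m_\ghostSymbol)$ gives $t_\firstIndex=(m_\ghostSymbol-m_\firstIndex)\imod\numberSymbol$, which is precisely \eqnTag~\eqref{eqn::Kernels}; reducing a negative difference via $-x\equiv\numberSymbol-x\imod\numberSymbol$ merely places $t_\firstIndex$ back in the range $0,\ldots,\numberSymbol-1$ when $m_\firstIndex>m_\ghostSymbol$.

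The only step requiring separate care is the extra perpendicular projection, indexed by $\firstIndex=\numberSymbol$, which is the family of constant-$x$ lines $x\equiv t\imod\numberSymbol$ and is therefore \emph{not} of the slant form \eqnTag~\eqref{eqn::DiscreteLine}; its bins are indexed by the first coordinate alone. Evaluating there sends the origin to bin $0$ and $(1,m_\ghostSymbol)$ to bin $1$ irrespective of $m_\ghostSymbol$, yielding the claimed $t_\firstIndex=1$. I expect this coordinate bookkeeping --- keeping straight which coordinate indexes each projection family --- to be the main (and essentially the only) obstacle, the arithmetic being otherwise immediate. I would close with the consistency check against \propTag~\ref{thm::GhostKernels}: when $m_\firstIndex=m_\ghostSymbol$ both entries fall in bin $0$ and cancel, so the kernel projects identically to zero at its own finite angle, exactly as a Ghost kernel must.
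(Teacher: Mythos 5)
Your proof is correct, and it arrives at the same destination as the paper's but by a cleaner route. The paper argues geometrically: it tracks when the \ac{2PSE} $[1,m_\firstIndex]$, swept through increasing translates, ``eventually samples'' the point $(1,m_\ghostSymbol)$, and it splits into two cases ($m_\firstIndex < m_\ghostSymbol$ giving $t>0$, and $m_\firstIndex > m_\ghostSymbol$ giving $t<0 \imod \numberSymbol$ via wraparound), with the slope difference identified as the sampled translate in each case. Your substitution of the two support points into the line congruence \eqnTag~\eqref{eqn::DiscreteLine}, i.e.\ $t \equiv y - m_\firstIndex x \imod \numberSymbol$, collapses that case analysis into a single modular identity --- the wraparound the paper narrates is just the reduction $-x \equiv \numberSymbol - x$. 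Your version also buys two things the paper's proof omits: an explicit treatment of the perpendicular $\firstIndex=\numberSymbol$ projection (which appears in the statement but is never addressed in the paper's argument), and the observation that the prime size of $\numberSymbol$ is what guarantees the lines of each slope genuinely partition the support into well-defined bins. The paper's geometric narrative, for its part, conveys the sampling intuition behind \figTag~\ref{fig::GhostOperatorsEg}, but as a proof yours is the tighter and more complete one.
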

\begin{proof}
The projection $m_\ghostSymbol$ will be zero by definition with all other projections being non-zero. For projections with $m_\firstIndex < m_\ghostSymbol$, the negative term will appear in the translates $t>0 \!\imod \numberSymbol$ because the initial \ac{2PSE} $[1,m_\firstIndex]$ for $t=0$ will not sample the point $(1,m_\ghostSymbol)$ while projecting. The difference of slope $(m_\ghostSymbol - m_\firstIndex)$ defines the translate where the point is eventually sampled. For projections with $m_\firstIndex > m_\ghostSymbol$, the negative term will appear in the translates $t<0 \!\imod \numberSymbol$ because the initial \ac{2PSE} $[1,m]$ will not sample the point $(1,m_\ghostSymbol)$ while projecting until the \ac{2PSE} wraps around the image. Hence, the difference of slope $(m_\ghostSymbol - m_\firstIndex)$ will be negative and defines the translate where the point is eventually sampled.
\end{proof}
\begin{corollary}[Kernel Operators]\label{thm::KernelOps}
The projections of the kernels $[1,m_\ghostSymbol]$ only have $\numberSymbol$ distinct combinations of positive and negative values.
\end{corollary}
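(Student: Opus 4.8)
The plan is to read the corollary off directly from the structure already exposed by \propTag~\ref{thm::KernelProjections}. First I would observe that every projection of a kernel $[1,m_\ghostSymbol]$ is a one-dimensional array of length $\numberSymbol$ holding exactly two non-zero entries: a positive term pinned at the zeroth translate and a negative term at translate $t_\firstIndex = (m_\ghostSymbol - m_\firstIndex)\imod\numberSymbol$. Because the positive term never moves, the whole ``combination of positive and negative values'' for a given projection is parameterised by the single quantity $t_\firstIndex$, i.e. by the position of the negative term alone.

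Next I would note that $t_\firstIndex$ depends only on the difference $(m_\ghostSymbol - m_\firstIndex)$ taken modulo $\numberSymbol$, and not on $m_\ghostSymbol$ and $m_\firstIndex$ separately. As the slope $m_\ghostSymbol$ and the projection index $m_\firstIndex$ each range over $\{0,\ldots,\numberSymbol-1\}$, this difference visits every residue in $\{0,1,\ldots,\numberSymbol-1\}$ and nothing beyond it, so $t_\firstIndex$ attains exactly $\numberSymbol$ distinct values. Consequently the collection of all projection combinations, gathered over every kernel and every slope, collapses to precisely $\numberSymbol$ distinct arrangements, one for each admissible position of the negative term. This is the counting step that yields the bound in the statement.

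The only point demanding care is the perpendicular ($\firstIndex = \numberSymbol$) projection, for which \propTag~\ref{thm::KernelProjections} fixes the negative term at $t_\firstIndex = 1$ rather than at $(m_\ghostSymbol - m_\firstIndex)\imod\numberSymbol$. Here I would simply remark that $t_\firstIndex = 1$ already lies in the residue set $\{0,\ldots,\numberSymbol-1\}$, so this slice reproduces one of the $\numberSymbol$ combinations already enumerated and contributes no fresh arrangement. I expect this bookkeeping --- verifying that the extra row-sum slice is absorbed into the existing count instead of enlarging it --- to be the only (mild) obstacle; the rest follows immediately from the modular translate formula, making the corollary a direct consequence of \propTag~\ref{thm::KernelProjections}.
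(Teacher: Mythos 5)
Your proposal is correct and takes essentially the same approach as the paper: the paper's proof likewise reads the corollary directly off \propTag~\ref{thm::KernelProjections}, attributing the count of $\numberSymbol$ to the wrapping of translates into the $\numberSymbol$ residue classes of the finite geometry. Your explicit parameterisation by the position of the negative term, together with the check that the perpendicular projection is absorbed into the existing count, merely spells out details the paper leaves implicit.
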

\begin{proof}
This follows from \propTag~\ref{thm::KernelProjections} and the fact that the system is constructed within a finite geometry, i.e. because of the wrapping of the values since there are only $\numberSymbol$ residue classes.
\end{proof}
Thus, the unique kernel projections (following \corTag~\ref{thm::KernelOps}) and their eigenvalues have the form as shown in part (a) of \figTag~\ref{fig::GhostOpsSchematic}.
\begin{figure}[\placement]
 \centering
 \subfloat[]{
 \includegraphics[width=0.21\textwidth]{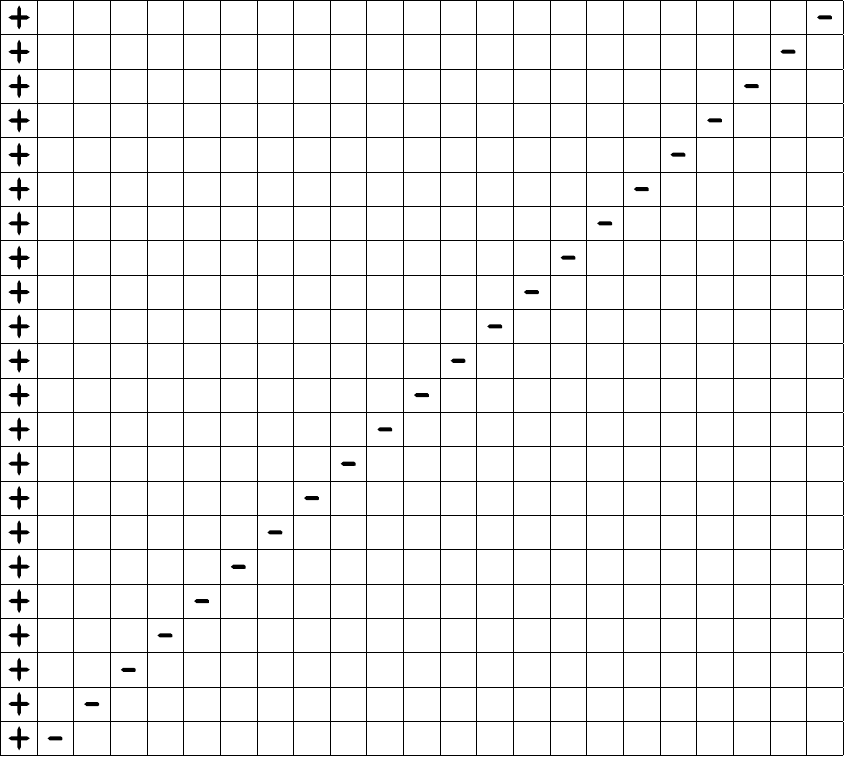}}
 \hspace{0.15cm}
 \subfloat[]{
 \includegraphics[width=0.2\textwidth]{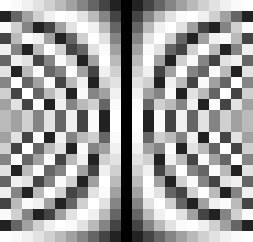}}
 \caption{A schematic of the \ac{1D} Ghost convolution operators (a). Each row represents the unique projection of a Ghost convolution kernel. (b) shows their eigenvalues in (centred) discrete Fourier space.}
 \label{fig::GhostOpsSchematic}
\end{figure}
These eigenvalues can be precomputed, since they depend purely on the set of finite angles of the missing projections as given by \propTag~\ref{thm::KernelProjections}. The form of the \ac{DFT} eigenvalues are shown in part (b) of \figTag~\ref{fig::GhostOpsSchematic} and \figTag~\ref{fig::GhostOperators} for $\numberSymbol=479$.

Thus, the \ac{1D} Ghost convolution approach may be used to generate a \ac{2D} Ghost as follows:
\begin{enumerate}
 \item Pre-compute the Ghost operator eigenvalues via the \ac{DFT}. This can be used as a hash table to pick out the relevant operator based on the projection being convolved.
 \item Convolve a delta function with each of the \acp{2PSE} by selecting the correct eigenvalues for each operator using the eigenvalues hash table (as given by previous step) via \eqnTag~\eqref{eqn::Kernels} and multiplying these operators with the eigenvalues of the delta function.
 \item Inverse \ac{DFT} to obtain the Ghost structure in image space.
\end{enumerate}
The computational complexity is $O(\ghostNumberSymbol \numberSymbol\log_2 \numberSymbol)$ for pre-computing the eigenvalues, $O(\mu \numberSymbol\log_2 \numberSymbol)$ for the \ac{1D} \acp{DFT} and $O(\mu \ghostNumberSymbol \numberSymbol)$ for computing the convolutions. This method is advantageous when $\mu \ll \numberSymbol$, since the computations are done only on the small number of known projections. A simple example of this construction is shown in \figTag~\ref{fig::GhostOperatorsEg} for two \acp{2PSE}.
\begin{figure}[\placement]
 \centering
 \subfloat[]{
 \includegraphics[width=0.10\textwidth]{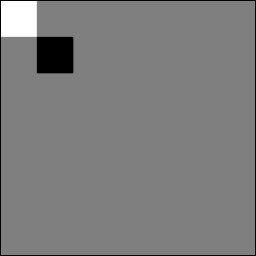}}
 \hspace{0.1cm}
 \subfloat[]{
 \includegraphics[width=0.11\textwidth]{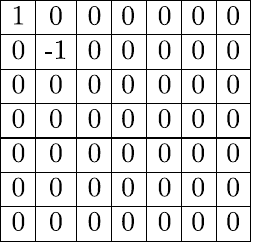}}
 \hspace{0.1cm}
 \subfloat[]{
 \includegraphics[width=0.10\textwidth]{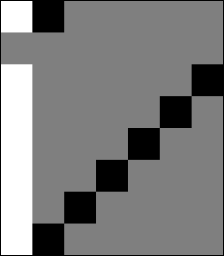}}
 
 \subfloat[]{
 \includegraphics[width=0.10\textwidth]{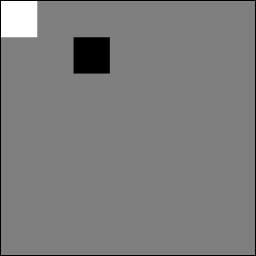}}
 \hspace{0.1cm}
 \subfloat[]{
 \includegraphics[width=0.11\textwidth]{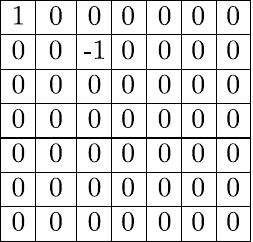}}
 \hspace{0.1cm}
 \subfloat[]{
 \includegraphics[width=0.10\textwidth]{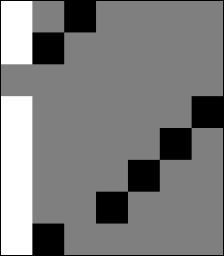}}
 
 \subfloat[]{
 \includegraphics[width=0.10\textwidth]{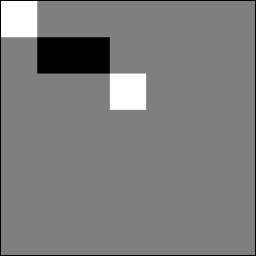}}
 \hspace{0.1cm}
 \subfloat[]{
 \includegraphics[width=0.11\textwidth]{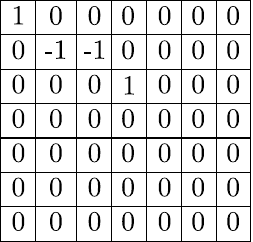}}
 \hspace{0.1cm}
 \subfloat[]{
 \includegraphics[width=0.10\textwidth]{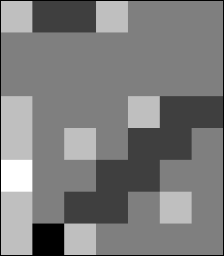}}
 \caption{A simple example of the \ac{1D} convolution approach to constructing \ac{2D} Ghosts, where grey and black denote zero and minus greyscale values respectively. (a) and (d) show the \acp{2PSE} $[1,1]$ and $[1,2]$ respectively. (c), (f) and (i) show the \ac{DRT} projections of the \ac{2PSE} (a), the \ac{2PSE} (d) and the convolution of \acp{2PSE} (g) respectively.}
 \label{fig::GhostOperatorsEg}
\end{figure}

However, both methods are susceptible to round-off errors and loss of precision for large $\ghostNumberSymbol$ when utilising the \ac{DFT} for the convolutions. This problem, which manifested as numerical overflow, was also encountered by Chandra~\emph{et al.}~\citep{Chandra2008} with their method to remove Finite Ghosts. This numerical growth can be easily seen as a direct consequence of the convolutions of the eigenvalues in \figTag~\ref{fig::GhostOperators}. The solution is to use the \acl{NTT} and the \acl{NRT} of Chandra~\citep{Chandra2010c}.

\subsection{Number Theoretic Convolution}
The \acf{NTT} allows one to compute convolutions just as the \ac{DFT} because the unit circle is replaced with the digital ``circle''
\begin{eqnarray}\label{eqn::Euler}
\primRootSymbol^{\modulusSymbol-1} &\equiv& 1 \imod \modulusSymbol,
\end{eqnarray}
so that $\primRootSymbol^{\modulusSymbol-1} - 1$ is a multiple of $\modulusSymbol$, $\numberSymbol$ is a multiple of $\modulusSymbol-1$ and $\primRootSymbol, \modulusSymbol \in \wholeNumbers$~\citep{Pollard1971}. The successive powers $\{1,\ldots,\modulusSymbol-1\}$ of $\primRootSymbol$ generates a unique set of integers in some order modulo $\modulusSymbol$. Such a number $\primRootSymbol$ is called the primitive root of $\modulusSymbol$. 

The primitive root(s) $\primRootSymbol$ in these cases have to be found by trial and error and can be computed by dividing $\modulusSymbol-1$ by the prime factors $\primeSymbol_j$ of $M-1$, such that $\primRootSymbol^{(\modulusSymbol-1)/\primeSymbol_j} \not\equiv 1 \imod \modulusSymbol$, where the trial value of $\primRootSymbol$ is prime. Integer coefficients allow computations to be done without round-off error or numerical overflow, since the results are congruent modulo $\modulusSymbol$~\citep{Nussbaumer1978}. A new efficient and fast algorithm for computing prime-length \acp{NTT}, that may be utilised for this work, is described in~\ref{sec::FastNTT}.

Chandra~\citep{Chandra2010c} showed that the discrete \ac{FST} still holds within the \ac{2D} \ac{NTT} when placing Number Theoretic slices, i.e. the \acp{NTT} of the projections, into \ac{2D} \ac{NTT} space. This allows one to replace the \ac{DFT} in all computations, including those within the Ghost convolutions, with the \ac{NTT}. The resulting \ac{NRT} was constructed specifically to remedy the loss of precision when forming Finite Ghosts. Chandra~\citep{Chandra2010c} also showed that the implementation of the \ac{NTT} is faster than the \ac{DFT} because of its integer-only operations. 

Consequently, the Ghost convolution method is impervious to numerical overflow and faster than \ac{DFT} approaches. The \ac{NTT} integer eigenvalues of the Ghosts kernel projections are shown in \figTag~\ref{fig::GhostNTTOperators}.
\begin{figure}[\placement]
 \centering
 \includegraphics[width=0.45\textwidth]{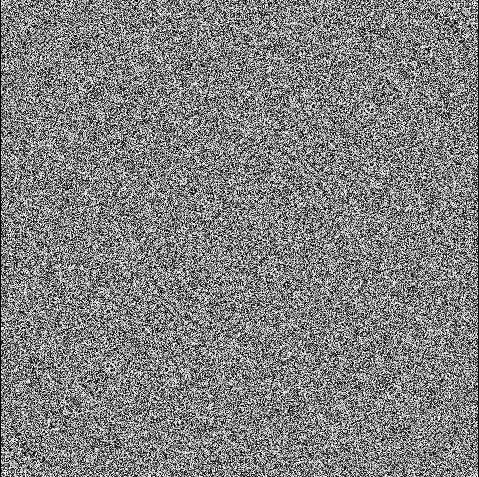}
 \caption{The \acl*{NTT} eigenvalues of one dimensional circulant Ghost operators for $\numberSymbol=479$. Note that the fine structure present is more easily seen when the above image is viewed from a distance.}
 \label{fig::GhostNTTOperators}
\end{figure}
The Ghosts in this space have the value $0 \!\!\imod \modulusSymbol$ in the direction of the missing projections. An example of Ghosts within the \ac{NRT} can be seen in \figTag~\ref{fig::NRTGhosts}.
\begin{figure}
\centering
\subfloat[]{
\includegraphics[width=0.15\textwidth]{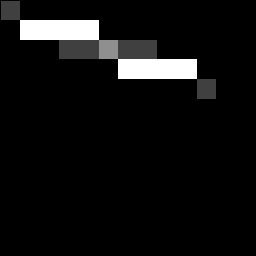}}
\hspace{0.75cm}
\subfloat[]{
\includegraphics[width=0.2\textwidth]{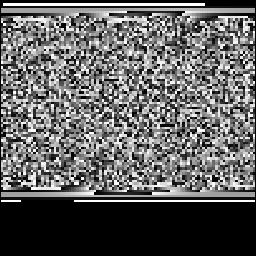}}
\caption{Examples of Ghosts in the \acl*{NRT}. (a) shows $\ghostNumberSymbol=4$ within a $\numberSymbol=13$ with Ghosts at $m=1,2,3,4 \imod {53}$. (b) shows $\ghostNumberSymbol=80$ within a $\numberSymbol=101$ with Ghosts at $m=1,\ldots,80 \imod {607}$. Each Ghost occupies $\ghostNumberSymbol+1$ rows in the image.}
\label{fig::NRTGhosts}
\end{figure}
Ghosts in the \ac{NRT} have another important property in that any physical or perceivable structures in the Ghosts are difficult to discern. This makes the \ac{NRT} Ghosts well suited for encoding and encryption. 

In summary, the advantages of the \ac{1D} Ghost convolution approach are
\begin{enumerate}
 \item Simplicity: \ac{1D} operators are simple and cyclic convolutions are much easier to compute.
 \item Efficiency: The majority of \ac{DRT} or \ac{NRT} space will be empty due to the Ghosts, so one always keeps the amount of work to a minimum.
 \item Speed: \ac{1D} blocks are at most $\numberSymbol$ elements apart in memory. 
\end{enumerate}

Once the Ghost is constructed/convolved, it can be used to remove the artefacts formed by the missing slices (represented via their \acp{2PSE}) from which it is constructed. This process of de-convolution is described in the next section.

\subsection{Ghost De-Convolution}\label{sec::Deconvolve}
The Ghost convolved in the previous section may be applied as a mask to remove the Ghost introduced by missing projections in the \ac{DRT} or missing slices in the \ac{DFT}. The mask is applied by individually convolving each of the rows of the Ghost with the rows of the image as shown in \figTag~\ref{fig::SubCirc}. 
\begin{figure}[\placement]
 \centering
 \includegraphics[width=0.33\textwidth]{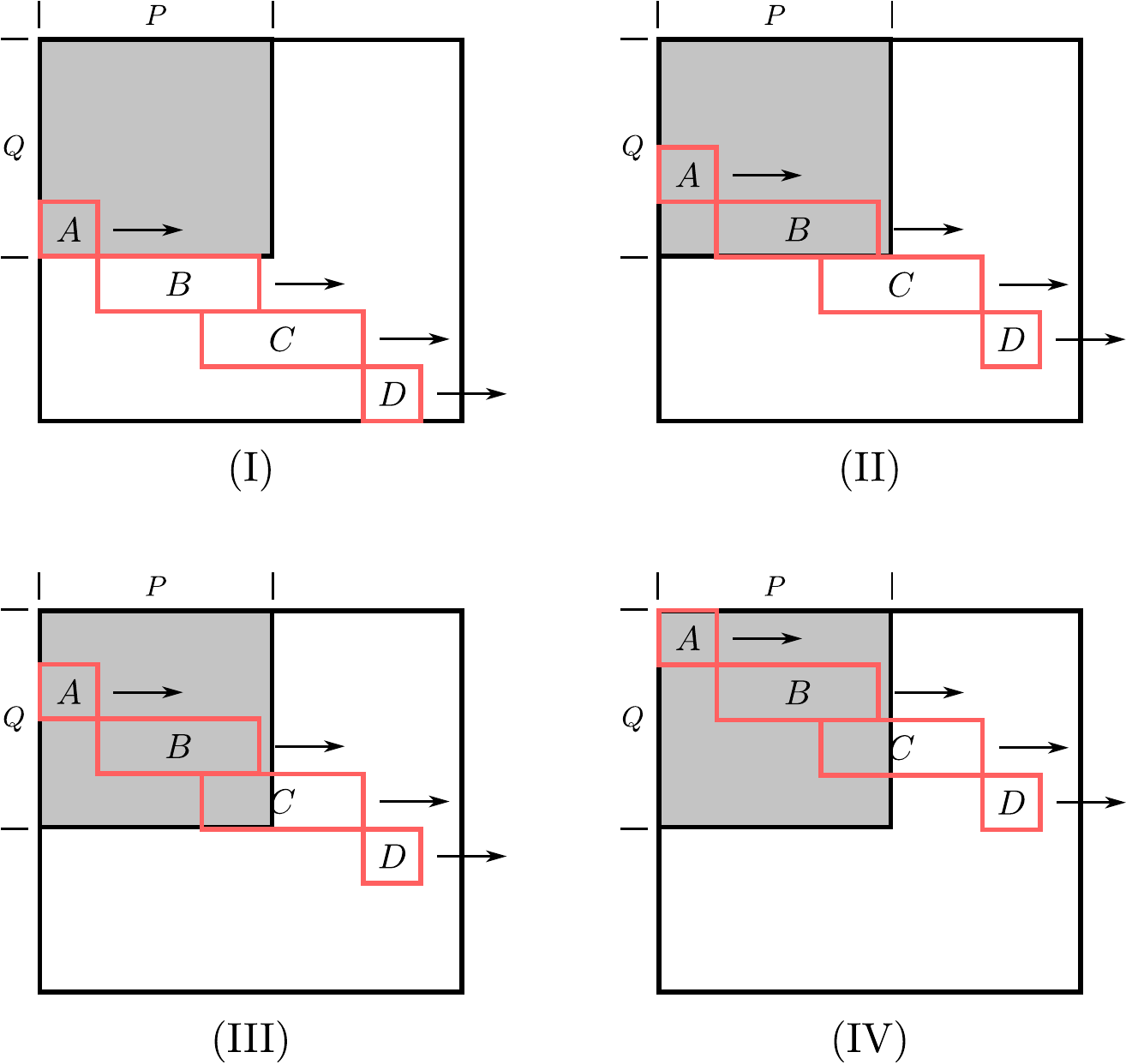}
 \caption{To partially de-convolve the $Q\times P$ sub-image (in grey), one aligns the first row $A$ of the Ghost constructed (in red) to the image row to be De-Ghosted. The remaining rows would be in the redundant region. Once (I) is completed, the De-Ghosted sub-image row is removed from the Ghosts in (II) to make it into a redundant row. The process is continued until all image rows are De-Ghosted.}
 \label{fig::SubCirc}
\end{figure}

Part (I) of \figTag~\ref{fig::SubCirc} is effectively a sum of \ac{1D} convolutions or a sub-circulant expression given as 
\begin{equation}
\mat{A}\cdot \tilde{I}(3,0)^T + \mat{B}\cdot \tilde{I}(4,0)^T + \mat{C}\cdot \tilde{I}(5,0)^T + \mat{D}\cdot \tilde{I}(6,0)^T = I(3,0)^T,\label{eqn::SubCirc}
\end{equation}
where $\mat{A}, \mat{B}, \mat{C}$ and $\mat{D}$ are circulants, $\tilde{I}$ denotes the Ghosted image and $I$ denotes the desired De-Ghosted sub-image. Row $I(3,0)$ is back-substituted  into $\tilde{I}(3,0)$ order to make it a redundant row. Then \eqnTag~\eqref{eqn::SubCirc} is repeated for (II) and so on. In this method, it is important to note that the shifts need to be $-m_\ghostSymbol$, rather than $m_\ghostSymbol$, when convolving Ghosts to undo the right shift of the circulants. This is done by setting $N-m_\ghostSymbol$ as the projection angle for the missing projections $m_\ghostSymbol$ in the Ghost convolutions.

The coefficients of the Ghosted image and Ghost structure need only be computed once and the remaining computations done in \ac{DFT} or \ac{NTT} space. Thus, the computational complexity of this de-convolution method is $O(Q\ghostNumberSymbol\primeSymbol)$. Note also that the de-convolution requires a total of $\ghostNumberSymbol+1$ rows, with $\ghostNumberSymbol$ of those rows being redundant, to function.

A visual interpretation to constructing Ghosts, that unifies the Ghost Recovery algorithm of Chandra~\citep{Chandra2008} and the \ac{2D} convolution approach of this paper, can be made using $n$-gons. See the thesis of Chandra~\citep{Chandra2010d} for initial work on this topic.

More work has to be done with both methods when \ac{DFT} space has noise or inconsistencies present as these also become convolved during the De-Ghost process. The convolution approach requires a very good estimation of noise prior to De-Ghosting, so that the estimates may be used to de-convolve their effects on the results. Further work needs to be done in generalising the convolution approach to arbitrary missing discrete Fourier coefficients using the theory outlined in this work. Recent work by Svalbe \emph{et al.}~\citep{Svalbe2010b} discussed the minimal extent of Finite Ghosts, which may prove useful in this endeavour. In the next section, the De-Ghost method is applied to the discrete inverse problem of determining a reconstruction from a set of rational angle aperiodic projections.

\section{Discrete Reconstruction}\label{sec::Mojette}
Chandra \emph{et al.}~\citep{Chandra2008} utilised their Ghost removal technique to reconstruct an image from a limited angle set of discrete \ac{1D} projections. The term limited angle refers to the fact that the coverage of the half-plane need not be uniform and/or complete. Recent work done by Chandra~\emph{et al.}~\citep{Chandra2010a} has been done in improving the use of the discrete projection data within the \ac{DFT} for fast reconstruction. However, the angle set utilised in~\citep{Chandra2010a} covered the half-plane. In this section, the De-Ghost method will be used to extend the work of Chandra~\emph{et al.}~\citep{Chandra2010a} to include any limited angle set of projections and reducing the total number of unique projections (for an exact reconstruction) to $Q+1$ for a $Q\times P$ image.

For a $Q\times P$ image in a $\numberSymbol\times \numberSymbol$ space, a total of $Q+1$ projections need to be taken because one needs $\ghostNumberSymbol$ redundant rows in the image for $\ghostNumberSymbol$ missing slices. This means that there must be $\mu$ number of projections so that $\ghostNumberSymbol = \numberSymbol-\mu = \numberSymbol-(Q+1)$, since there are $\numberSymbol+1$ total number of slices. The rational angle projection geometry is known as the \ac{MT}. \figTag~\ref{fig::MT} shows a simple example of a \ac{MT} for a $4\times 4$ image using three projections.
\begin{figure}[\placement]
 \centering
 \includegraphics[width=0.48\textwidth]{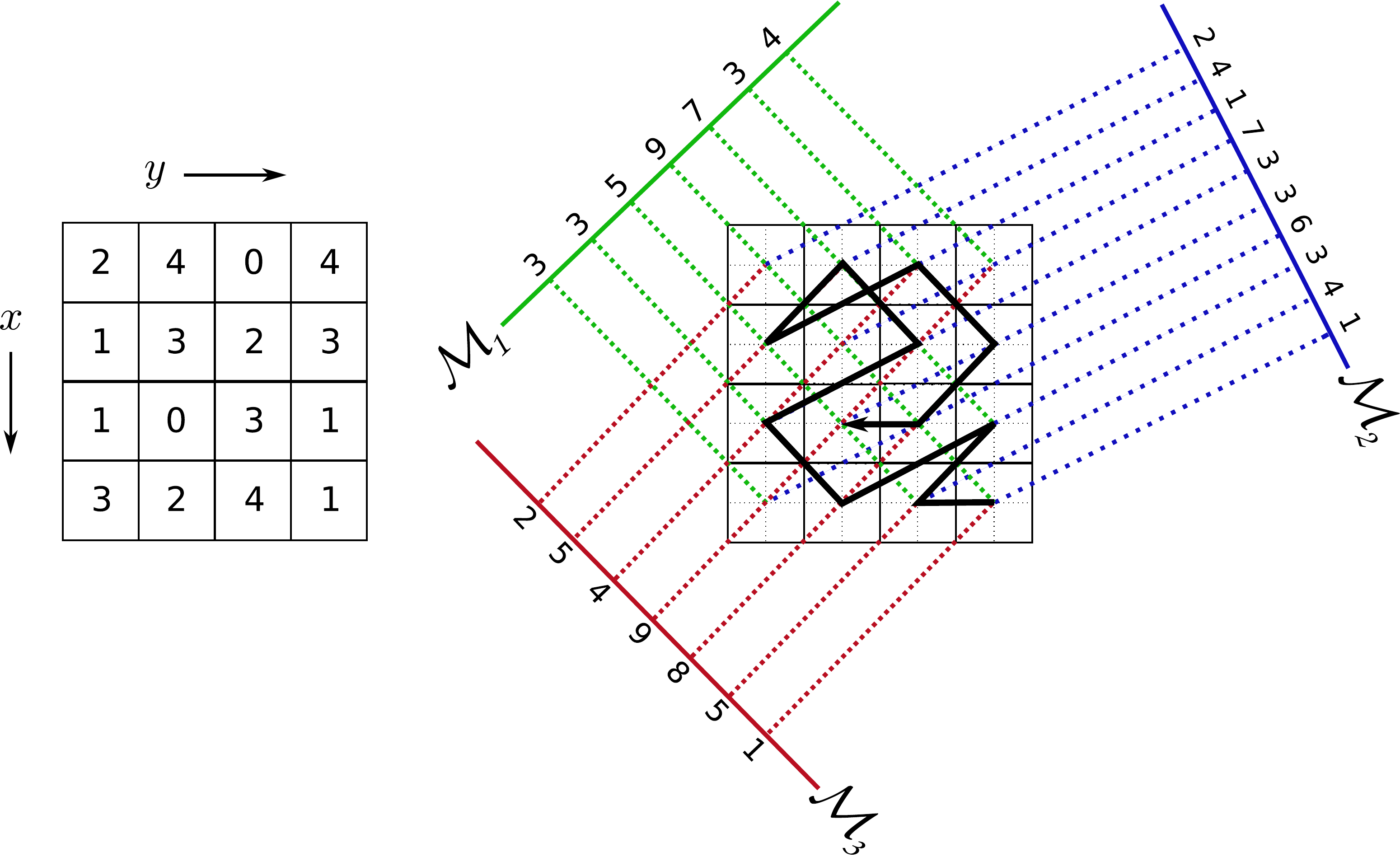}
 \caption{An example of a \acl*{MT} for a discrete image of size $4\times 4$ using the three projections $[1,1]$, $[1,-1]$ and~$[1,-2]$. The bold lines within the right-hand grid shows a possible reconstruction path using a corner-based reconstruction method~\citep{Normand2006}.}
\label{fig::MT}
\end{figure}

\subsection{Ghost Angle Sets}
Chandra~\emph{et al.}~\citep{Chandra2010a} showed that the \ac{MT} projection set directly and efficiently maps to a prime-sized \ac{DFT} space exactly as
\begin{equation}
 m \equiv p q^{-1} \imod N,\label{eqn::mMap}
\end{equation}
where $q^{-1}$ is the multiplicative inverse of $q$ that can be computed easily via the Extended Euclidean algorithm. For the dyadic or power of two case, extra $s$ projections are needed (see Chandra~\citep{Chandra2010c}), which are mapped as
\begin{equation}
 2s \equiv p^{-1} q \imod N,\label{eqn::sMap}
\end{equation}
when the $\gcd(q,N)>1$, Positive $[q,p]$ values represent the first octant of the half-plane with the other octants produced by $[-q,p], [p,q]$ and $[p,-q]$~\citep{Svalbe2011}. The reconstruction is then obtained with a computational complexity of $O(n\log_2 n)$ with $n = N^2$, which is the same order as the \ac{2D} \ac{FFT}. 

The same mapping can be used to cover a fraction of the half-plane by limiting the number of octants used within the projection set. One then obtains a set that is limited, but covers \ac{DFT} space fully because there exists a multiplicity of possible rational vectors for each finite angle $[1,m]$. This multiplicity is shown as graphs in \figTag~\ref{fig::Multiplicity} for the case of a quadrant and a half-plane.
\begin{figure}[\placement]
 \centering
 \subfloat[]{
 \includegraphics[width=0.45\textwidth]{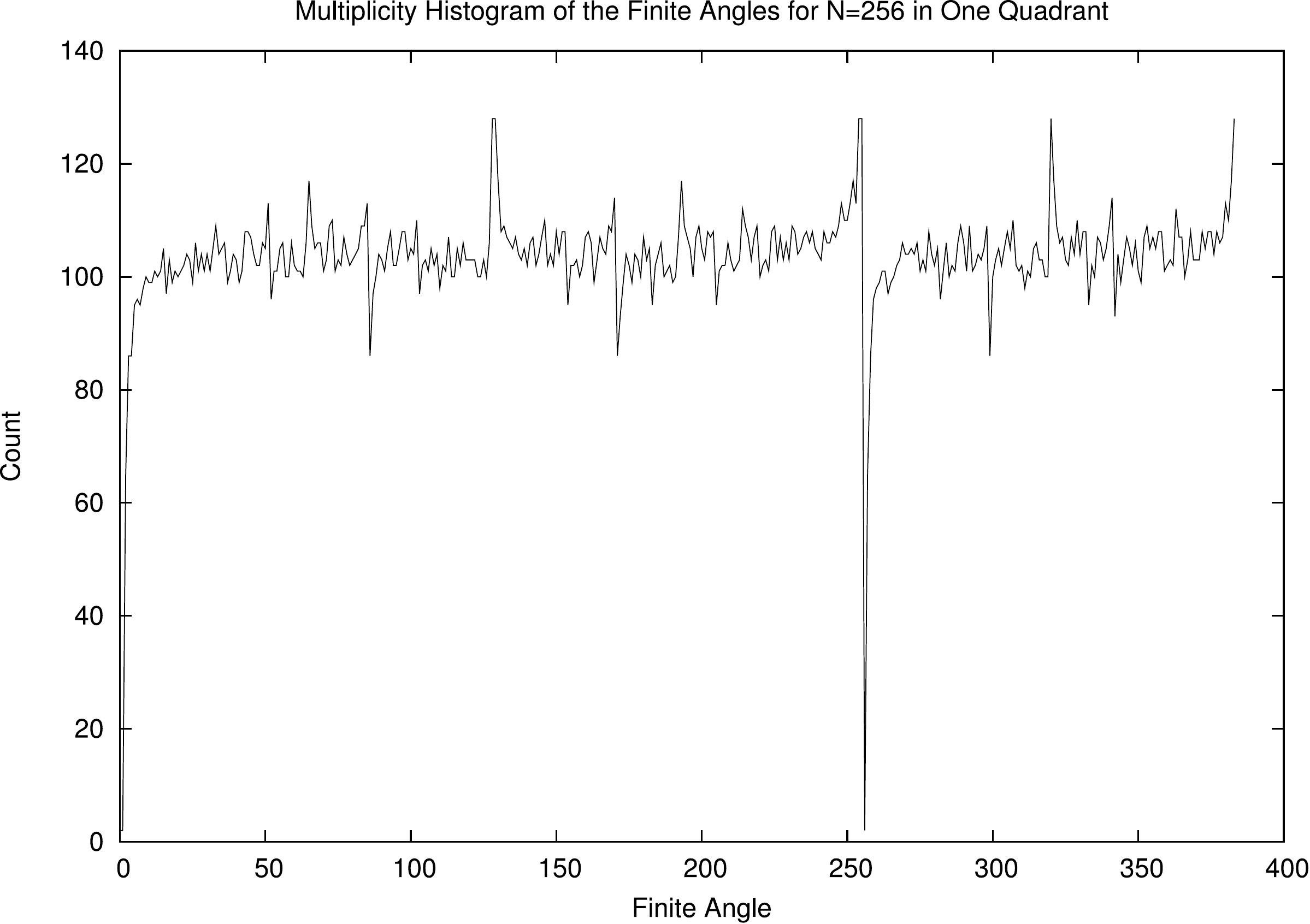}}
 
 \subfloat[]{
 \includegraphics[width=0.45\textwidth]{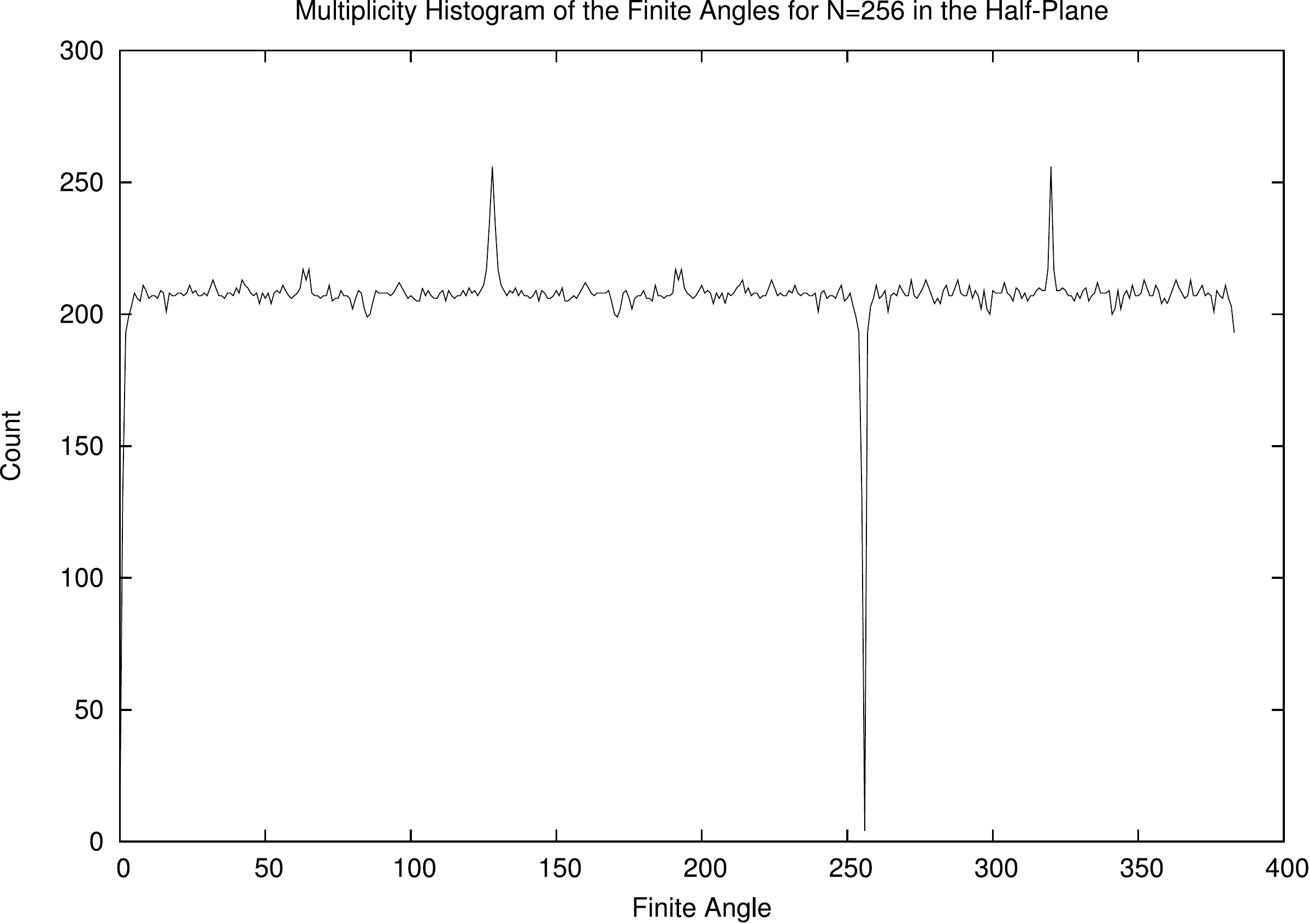}}
 \caption{The multiplicity of the mapping between the finite angle and rational angle sets for $N=256$, i.e. for the dyadic case. Because of the dyadic size, the range of values are $0 \leqslant m < N$ and $s' = N+s$, where $0 \leqslant s < N/2$.}
\label{fig::Multiplicity}
\end{figure}
One may observe that the multiplicity is relatively ``flat'' in a discrete sense, but with small variations. Graph~\ref{fig::Multiplicity}(a) is reminiscent of curves obtained by Svalbe and Kingston~\citep{SvalbeKingston} when observing the ``unevenness'' of the rational vectors from uniform coverage. The unevenness of the Farey vectors is related to the distribution of prime numbers and the Riemann Hypothesis~\citep{Franel1924, Landau1924}.

The theory of Finite Ghosts allows one to reduce the number of projections required to $Q+1$, rather than a total $N+1$, for these limited angle sets by choosing the projections with a desired property, such as the ones with the smallest number of bins. An example of such projection sets are given in \figTag~\ref{fig::GhostSet}.
\begin{figure}[\placement]
 \centering
 \subfloat[]{
 \includegraphics[width=0.18\textwidth]{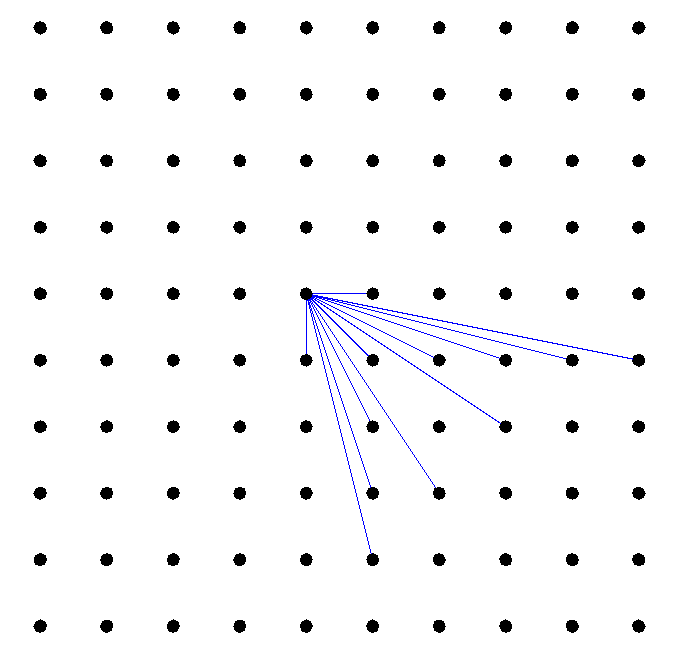}}
 \hspace{0.25cm}
 \subfloat[]{
 \includegraphics[width=0.18\textwidth]{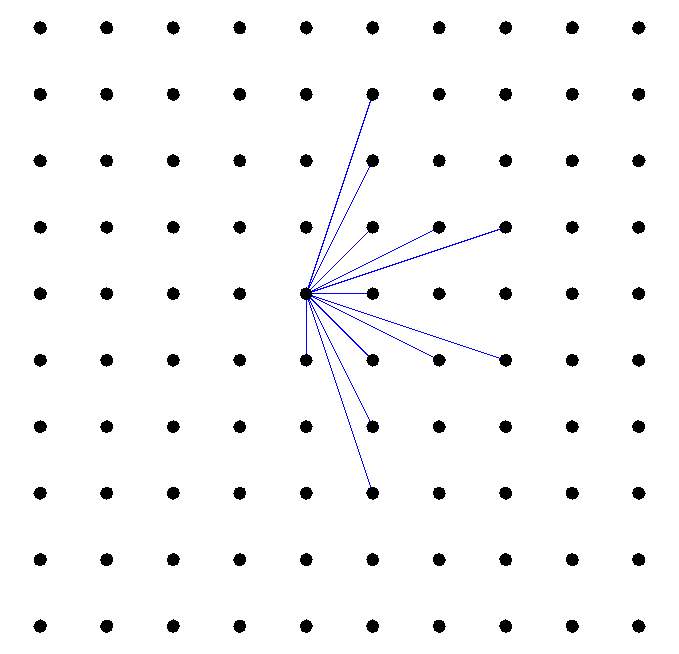}}
 \caption{Examples of rational projection sets when utilising Ghosts in discrete reconstruction. (a) shows a set limited to a quadrant and (b) shows a set using the half-plane. The sets apply to a $11\times 11$ image embedded into a $23\times 23$ space.}
\label{fig::GhostSet}
\end{figure}

\subsection{Results}
Let us consider a $100\times 100$ image of Lena and 101 rational angle projections similar to the geometry of \figTag~\ref{fig::GhostSet}(b). Also, for performance, let $\numberSymbol=257$, so that $\numberSymbol-1$ is a power of two (see \ref{sec::FastNTT}). Therefore, from the 101 projections, there will be 156 Ghosts to remove and also 156 redundant rows in the image, so that the De-Ghost method can be applied.

\figTag~\ref{fig::Results}(a) shows the \ac{DRT} space resulting from 101 rational projections acquired across the half-plane and mapped via \eqnTag~\eqref{eqn::mMap}. \figTag~\ref{fig::Results}(b) and (c) show the resulting Ghosts superimposed on the reconstruction due to the missing projections in (a). \figTag~\ref{fig::Results}(d) shows the \ac{NTT} Ghost eigenvalues that are used to construct the de-convolution Ghost in \figTag~\ref{fig::Results}(e). \figTag~\ref{fig::Results}(f) shows the final result of the De-Ghosting, which is an exact reconstruction.
\begin{figure}[\placement]
 \centering
 \subfloat[]{
 \includegraphics[width=0.16\textwidth]{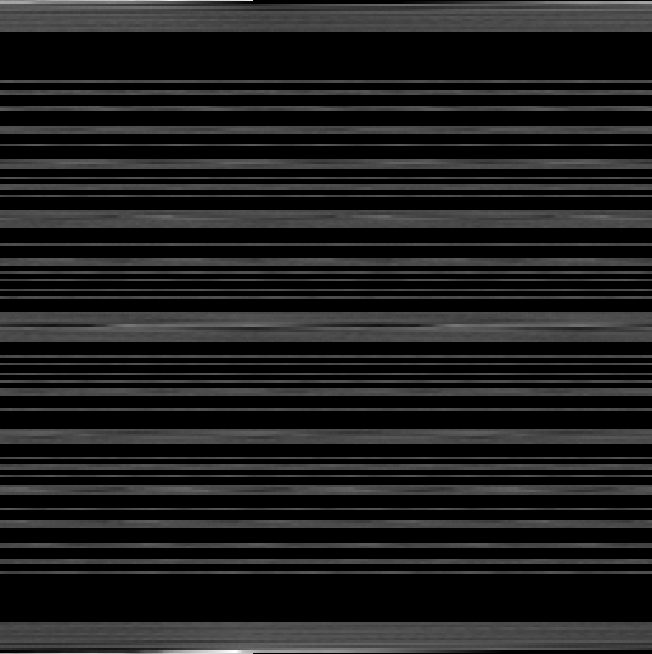}}
 \hspace{0.45cm}
 \subfloat[]{
 \includegraphics[width=0.16\textwidth]{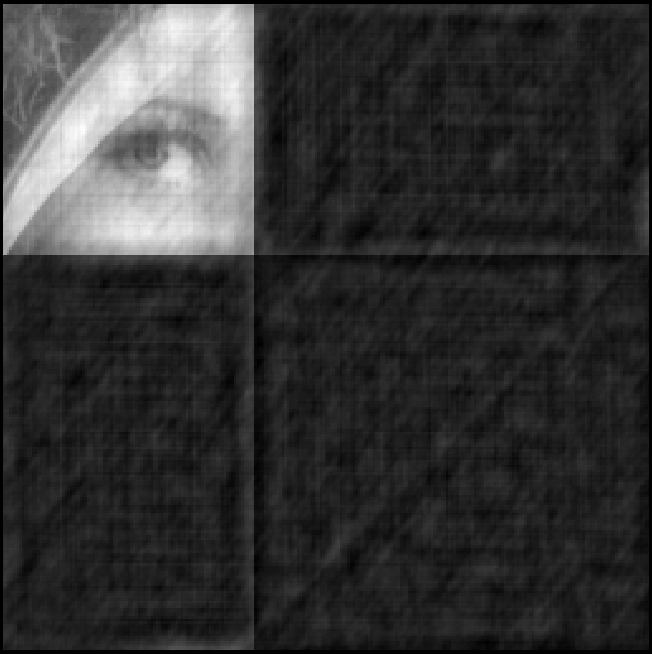}}

 \subfloat[]{
 \includegraphics[width=0.16\textwidth]{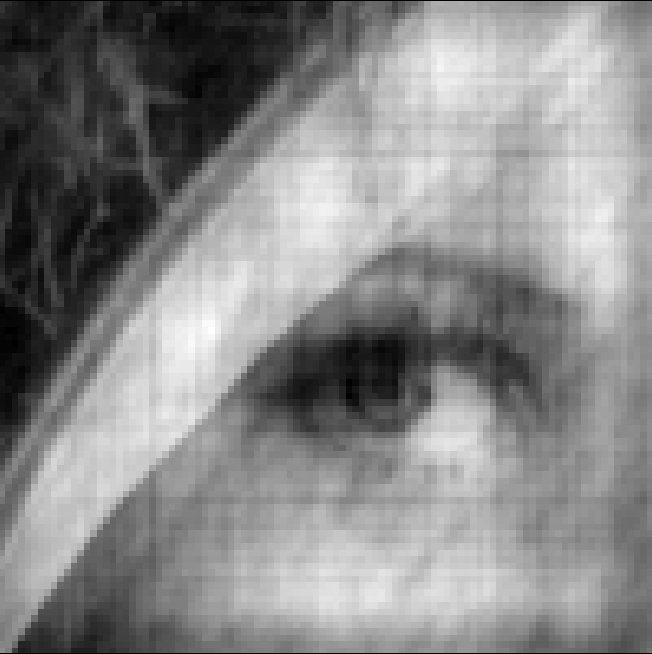}}
 \hspace{0.45cm}
 \subfloat[]{
 \includegraphics[width=0.16\textwidth]{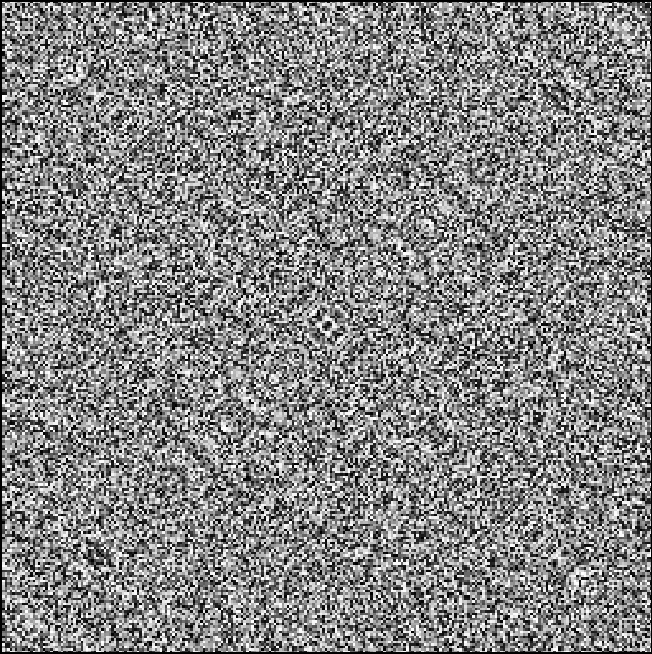}}

 \subfloat[]{
 \includegraphics[width=0.16\textwidth]{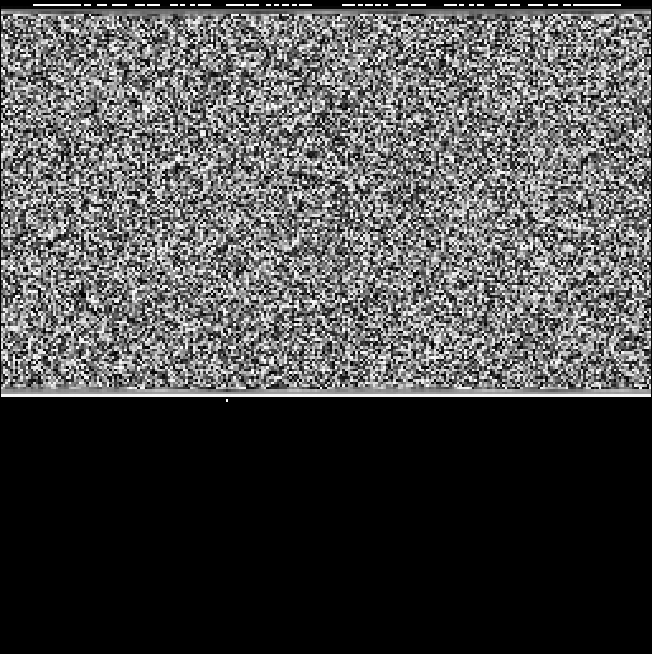}}
 \hspace{0.45cm}
 \subfloat[]{
 \includegraphics[width=0.16\textwidth]{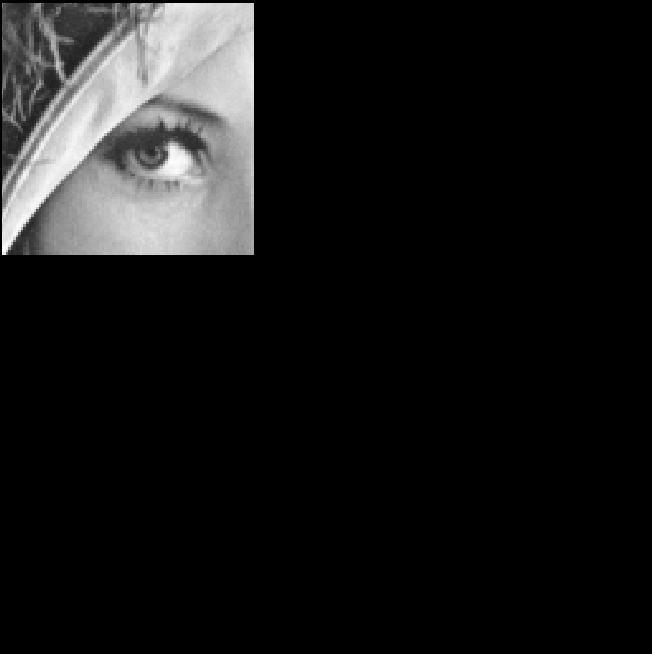}}
 \caption{The results of the De-Ghost method of \secTag~\ref{sec::FiniteGhosts} applied to discrete reconstruction. A total of 101 rational angle projections of a $100\times 100$ image of Lena was exactly reconstructed within a \ac{DFT} space of size $257\times 257$. (a) shows the resulting \ac{DRT} space. (b) shows the resulting Ghosts when (a) is reconstructed. (c) shows a cropped version of (b) focussing on the embedded image. (d) and (e) show the Ghost eigenvalues and the Ghost that was used to recover the reconstructed image exactly (shown in (f)).}
\label{fig::Results}
\end{figure}
The computation time was approximately 500 milliseconds on a 1.6 GHz AMD\texttrademark~Turion\texttrademark~64-bit Laptop. This fairs a few orders of magnitude better than the method of Chandra~\emph{et al.}~\citep{Chandra2008}, whose computations took in the order of hours to complete.

Further work needs to be done to handle inconsistencies within projections while utilising De-Ghosting. Also, a study of the small variations within the rational angle multiplicity may be important in understanding whether some projections are more important than others.

\section*{Conclusion}
A new discrete theory of Ghosts was constructed that allowed one to describe the exact effects of missing periodic slices in the \acf{DFT}. The theory was used to construct new Ghost convolution and de-convolution methods, which can be utilised in recovering missing slices in the \ac{DFT} (see \figsTag~\ref{fig::Kernel} to~\ref{fig::SubCirc} and \propsTag~\ref{thm::Ghosts} to~\ref{thm::KernelProjections}). The methods required the use of the \acl{NTT} and the \acl{NRT} of Chandra~\citep{Chandra2010c} to avoid numerical overflow and other loss of precision problems. This De-Ghost method was then used to solve the discrete inverse problem of reconstructing from limited rational angle projections (see \figsTag~\ref{fig::GhostSet} and~\ref{fig::Results}). This solution has a computational complexity of $O(n\log_2 n)$ with $n = N^2$ and results in an exact reconstruction. A brief study of the limited angle set was also made (see Graphs in \figTag~\ref{fig::Multiplicity}).

\section*{Acknowledgements}
S. Chandra would like to thank the Faculty of Science, Monash University for a Ph.D scholarship and a publications award. N. Normand would like to thank the Australian Research Council for his International Fellowship (ARC LX 0989907).

\appendix
\section{Fast Prime-Length Number Theoretic Transforms}\label{sec::FastNTT}
For a prime-length $\numberSymbol$, the \ac{NTT} can be computed by first selecting the modulus $\modulusSymbol$ as $\modulusSymbol = k\cdot \numberSymbol + 1$. This allows the power of $\primRootSymbol$ in \eqnTag~\eqref{eqn::Euler} to be a multiple of $\numberSymbol$. For example, the modulus for the prime-length $\numberSymbol=101$ is 607 with $k=6$. Then Rader's~\citep{Rader1968} can be used to compute its fast form.

Rader~\citep{Rader1968} constructed a Fast \ac{DFT} algorithm for prime-length \acp{DFT} by turning them into cyclic convolutions of composite lengths. This is achieved by permuting the input vector $\inputVecSymbol$ so that the Transform matrix $\FourierSymbol$ becomes a circulant matrix $\CircVandermondeSymbol$. The algorithm is computed by handling the DC term separately and then applying the following four steps:
\begin{enumerate}
 \item Compute the primitive root $\primRootSymbol$ for the length $N=\primeSymbol$.
 \item Determine the new order $\inputVecSymbol(\firstIndex) \to \inputVecSymbol(\firstIndex')$ of the elements of the input vector $\inputVecSymbol$, where $\firstIndex'$ is given by $\firstIndex' \equiv \primRootSymbol^{-\firstIndex} \imod \primeSymbol \equiv \primRootSymbol^{\primeSymbol-\firstIndex} \imod \primeSymbol$.
 \item Determine the unique row of $\CircVandermondeSymbol$, the circulant form of $\VandermondeSymbol$. This row is given by $\CircVandermondeSymbol_\firstIndex \equiv \rootSymbol^{\primRootSymbol^{\firstIndex}} \imod \primeSymbol$.
 \item If $\primeSymbol-1$ is not highly composite, determine a power of two $\numberSymbol'>2\cdot\primeSymbol-4$. Create a new sequence $\inputVecSymbol'$ of this length and pad it with $\numberSymbol'-\primeSymbol+1$ zeros between the first and second elements of the vector. Create $\CircVandermondeSymbol'$ of length $\numberSymbol'$ by repeating the $\primeSymbol-1$ sequence of $\CircVandermondeSymbol$.
\end{enumerate}
An example of the circulant form $\CircVandermondeSymbol$, as a $4\times 4$ sub-matrix, is given in \eqnTag~\eqref{eqn::Rader} for $\primeSymbol=5$ and $\primRootSymbol=2$. 
\begin{eqnarray}
 \left[
 \begin{array}{ccccc}
 0 & 0 & 0 & 0 & 0\\
 0 & \rootSymbol^1 & \rootSymbol^2 & \rootSymbol^3 & \rootSymbol^4\\
 0 & \rootSymbol^2 & \rootSymbol^4 & \rootSymbol^1 & \rootSymbol^3\\
 0 & \rootSymbol^3 & \rootSymbol^1 & \rootSymbol^4 & \rootSymbol^2\\
 0 & \rootSymbol^4 & \rootSymbol^3 & \rootSymbol^2 & \rootSymbol^1
 \end{array}
 \right] \overset{\textrm{Rader's Reorder}}{\Longrightarrow}
 \left[
 \begin{array}{ccccc}
 0 & 0 & 0 & 0 & 0\\
 0 & \rootSymbol^1 & \rootSymbol^2 & \rootSymbol^4 & \rootSymbol^3\\
 0 & \rootSymbol^3 & \rootSymbol^1 & \rootSymbol^2 & \rootSymbol^4\\
 0 & \rootSymbol^4 & \rootSymbol^3 & \rootSymbol^1 & \rootSymbol^2\\
 0 & \rootSymbol^2 & \rootSymbol^4 & \rootSymbol^3 & \rootSymbol^1
 \end{array}
 \right].\label{eqn::Rader}
\end{eqnarray}
The zeros in the above matrices are present because the DC term is computed separately and the first element of the input vector is added to the result after the convolution. The rest of the transform is completed as the following:
\begin{enumerate}
 \item Determine the eigenvalues of (or diagonalise) $\inputVecSymbol'$ and $\CircVandermondeSymbol'$ by computing the \ac{NTT} of each using the dyadic divide and conquer algorithms~\citep{Cooley1965, Agarwal1974}. 
 \item Multiply the eigenvalues to determine the eigenvalues of the convolution.
 \item Inverse \ac{NTT} the convolution eigenvalues to determine the convolution.
 \item Normalise the convolution by $\primeSymbol-1$ using the multiplicative inverse so that $(\primeSymbol-1)\cdot(\primeSymbol-1)^{-1} \equiv 1 \imod \modulusSymbol$.
 \item Add the first element of $\inputVecSymbol$ to each coefficient of the transform result.
 \item Compute the DC term by summing all terms of the input vector and placing this as the $\FFTfirstIndex=0$ coefficient of the transformed result $\inputVecSymbol(\FFTfirstIndex)$.
 \item Extract the rest of the coefficients of $\inputVecSymbol(\FFTfirstIndex)$ for $0 < \FFTfirstIndex \leqslant \primeSymbol-1$ of the prime-length transform elements by using $\FFTfirstIndex \equiv \primRootSymbol^{\firstIndex} \imod \primeSymbol$ from the normalised convolution.
 \item Normalise the result by $\primeSymbol$ if desired using multiplicative inverse so that $\primeSymbol\cdot\primeSymbol^{-1} \equiv 1 \imod {\primeSymbol'}$.
\end{enumerate}
The implementation of this transform can be found in the \ac{FTL} library~\citep{Chandra2009b}.

\footnotesize
\bibliographystyle{elsarticle-num}
\bibliography{RadonJabRef}

\begin{thebibliography}{10}
\expandafter\ifx\csname url\endcsname\relax
  \def\url#1{\texttt{#1}}\fi
\expandafter\ifx\csname urlprefix\endcsname\relax\def\urlprefix{URL }\fi
\expandafter\ifx\csname href\endcsname\relax
  \def\href#1#2{#2} \def\path#1{#1}\fi

\bibitem{Kak2001}
A.~C. Kak, M.~Slaney, Principles of Computerized Tomographic Imaging, Society
  of Industrial and Applied Mathematics, 2001.

\bibitem{Cooley1965}
J.~W. Cooley, J.~W. Tukey, An {A}lgorithm for the {M}achine {C}alculation of
  {C}omplex {F}ourier {S}eries, Mathematics of Computation 19~(90) (1965)
  297--301.
\newblock \href {http://dx.doi.org/10.2307/2003354}
  {\path{doi:10.2307/2003354}}.

\bibitem{Logan1975}
B.~F. Logan, The uncertainty principle in reconstructing functions from
  projections, Duke Mathematical Journal 42~(4) (1975) 661--706.

\bibitem{Smith1977}
K.~T. Smith, D.~C. Solmon, S.~L. Wagner, Practical and mathematical aspects of
  the problem of reconstructing objects from radiographs, Bulletin of the
  American Mathematical Society 83~(6) (1977) 1227--1270.

\bibitem{Gottleib2000}
D.~Gottleib, B.~Gustafsson, P.~Forssen, On the direct {F}ourier method for
  computer tomography, Medical Imaging, IEEE Transactions on 19~(3) (2000)
  223--232.
\newblock \href {http://dx.doi.org/10.1109/42.845180}
  {\path{doi:10.1109/42.845180}}.

\bibitem{Walden2000}
J.~Wald\'en, Analysis of the direct {F}ourier method for computer tomography,
  Medical Imaging, IEEE Transactions on 19~(3) (2000) 211--222.
\newblock \href {http://dx.doi.org/10.1109/42.845179}
  {\path{doi:10.1109/42.845179}}.

\bibitem{Bracewell1954}
R.~N. Bracewell, J.~A. Roberts, {Aerial Smoothing in Radio Astronomy},
  Australian Journal of Physics 7 (1954) 615--640.

\bibitem{Katz}
M.~Katz, Questions of Uniqueness and Resolution in Reconstruction from
  Projections, Lecture Notes in Biomathematics, Springer-Verlag, 1977.

\bibitem{Louis1981}
A.~K. Louis, Ghosts in tomography - the null space of the {R}adon transform,
  Mathematical Methods in the Applied Sciences 3 (1981) 1--10.
\newblock \href {http://dx.doi.org/10.1002/mma.1670030102}
  {\path{doi:10.1002/mma.1670030102}}.

\bibitem{Candes2006}
E.~Cand\`{e}s, J.~Romberg, T.~Tao, Robust uncertainty principles: exact signal
  reconstruction from highly incomplete frequency information, Information
  Theory, IEEE Transactions on 52~(2) (2006) 489--509.
\newblock \href {http://dx.doi.org/10.1109/TIT.2005.862083}
  {\path{doi:10.1109/TIT.2005.862083}}.

\bibitem{Herman1999}
G.~T. Herman, A.~Kuba, Discrete tomography: foundations, algorithms, and
  applications, Birkh\"auser, 1999.

\bibitem{Normand2010}
N.~Normand, I.~D. Svalbe, B.~Parrein, A.~M. Kingston, Erasure coding with the
  finite {R}adon transform, in: IEEE Wireless Communications \& Networking
  Conference, Sydney, 2010.

\bibitem{NormandEtAl}
N.~Normand, J.-P. Gu\'edon, O.~Philippe, D.~Barba, Controlled redundancy for
  image coding and high-speed transmission, Proc. of the SPIE - The
  International Society for Optical Engineering 2727 (1996) 1070--1081.

\bibitem{Chandra2008}
S.~S. Chandra, I.~Svalbe, J.-P. Gu\'edon, An exact, non-iterative {M}ojette
  inversion technique utilising ghosts, in: Lecture Notes in Computer Science
  (LNCS), Vol. 4992, Springer Berlin / Heidelberg, 2008, pp. 401--412.

\bibitem{Chandra2008b}
S.~S. Chandra, I.~Svalbe, A method for removing cyclic artefacts in discrete
  tomography using {L}atin squares, 19th International Conference on Pattern
  Recognition (2008) 1--4\href {http://dx.doi.org/10.1109/ICPR.2008.4761615}
  {\path{doi:10.1109/ICPR.2008.4761615}}.

\bibitem{Grigoryan2003}
A.~Grigoryan, Method of paired transforms for reconstruction of images from
  projections: discrete model, Image Processing, IEEE Transactions on 12~(9)
  (2003) 985--994.
\newblock \href {http://dx.doi.org/10.1109/TIP.2003.816017}
  {\path{doi:10.1109/TIP.2003.816017}}.

\bibitem{Hsung1996}
T.~Hsung, D.~Lun, W.-C. Siu, The discrete periodic {R}adon transform, Signal
  Processing, IEEE Transactions on 44~(10) (1996) 2651--2657.
\newblock \href {http://dx.doi.org/10.1109/78.539055}
  {\path{doi:10.1109/78.539055}}.

\bibitem{Chandra2010a}
S.~S. Chandra, N.~Normand, A.~Kingston, J.-P. Gu\'edon, I.~Svalbe, {Fast
  Mojette Transform for Discrete Tomography}, Elsevier Signal Processing
  Submitted June (in Review), Available on arXiv.org,
  \href{http://arxiv.org/abs/1006.1965v1}{arXiv:1006.1965v1 [physics.med-ph]}.

\bibitem{Chandra2009}
S.~S. Chandra, I.~Svalbe, A {F}ast {N}umber {T}heoretic {F}inite {R}adon
  {T}ransform, Proceedings of the Digital Image Computing Techniques and
  Applications Melbourne.
\newblock \href {http://dx.doi.org/10.1109/DICTA.2009.67}
  {\path{doi:10.1109/DICTA.2009.67}}.

\bibitem{Chandra2010c}
S.~S. Chandra, Exact image representation via a {N}umber-{T}heoretic {R}adon
  {T}ransform, IET Computer Vision In Peer Review, submitted April.

\bibitem{Davis1979}
P.~J. Davis, Circulant Matrices, John Wiley \& Sons, 1979.

\bibitem{Grigoryan1986}
A.~Grigoryan, New algorithms for calculating the discrete {F}ourier transforms,
  J. Vichislit. Matem. i Mat. Fiziki 25~(9) (1986) 1407--1412.

\bibitem{Bolker1987}
E.~D. Bolker, The {F}inite {R}adon {T}ransform, Contemporary Mathematics
  (American Mathematical Society) 63 (1987) 27--49.

\bibitem{Gertner1988}
I.~Gertner, A new efficient algorithm to compute the two-dimensional discrete
  {F}ourier transform, Acoustics, Speech and Signal Processing, IEEE
  Transactions on 36~(7) (1988) 1036--1050.
\newblock \href {http://dx.doi.org/10.1109/29.1627}
  {\path{doi:10.1109/29.1627}}.

\bibitem{Fill1989}
J.~A. Fill, The {R}adon {T}ransform on $\mathbb{Z}_n$, SIAM Journal on Discrete
  Mathematics 2~(2) (1989) 262--283.
\newblock \href {http://dx.doi.org/10.1137/0402023}
  {\path{doi:10.1137/0402023}}.

\bibitem{MatusFlusser}
F.~Mat\'u\v{s}, J.~Flusser, Image {R}epresentation via a {Fi}nite {R}adon
  {T}ransform, Pattern Analysis and Machine Intelligence, IEEE Transactions on
  15~(10) (1993) 996--1006.
\newblock \href {http://dx.doi.org/10.1109/34.254058}
  {\path{doi:10.1109/34.254058}}.

\bibitem{Do2003}
M.~Do, M.~Vetterli, The {F}inite {R}idgelet {T}ransform for image
  representation, Image Processing, IEEE Transactions on 12~(1) (2003) 16--28.
\newblock \href {http://dx.doi.org/10.1109/TIP.2002.806252}
  {\path{doi:10.1109/TIP.2002.806252}}.

\bibitem{Soille2003}
P.~Soille, Morphological Image Analysis, 2nd Edition, Springer-Verlag, 2003.

\bibitem{Pollard1971}
J.~M. Pollard, The {F}ast {F}ourier {T}ransform in a {F}inite {F}ield,
  Mathematics of Computation 25~(114) (1971) 365--374.

\bibitem{Nussbaumer1978}
H.~J. Nussbaumer, Overflow detection in the computation of convolutions by some
  number theoretic transforms, Acoustics, Speech and Signal Processing, IEEE
  Transactions on 26~(1) (1978) 108--109.

\bibitem{Chandra2010d}
S.~S. Chandra, {Circulant Theory of the Radon Transform}, Ph.D. thesis, School
  of Physics, Monash University (2010).

\bibitem{Svalbe2010b}
I.~Svalbe, N.~Normand, N.~Nazareth, S.~Chandra, On constructing minimal ghosts,
  APRS Conference, DICTA 2010 Accepted for presentation (2010) 1--3 December.

\bibitem{Normand2006}
N.~Normand, A.~Kingston, P.~\'Evenou, A geometry driven reconstruction
  algorithm for the {M}ojette transform, in: Lecture Notes in Computer Science
  (LNCS), Vol. 4245, Springer Berlin / Heidelberg, 2006, pp. 122--133.

\bibitem{Svalbe2011}
I.~Svalbe, Exact, scaled image rotation using the {F}inite {R}adon {T}ransform,
  Pattern Recognition Letters In Press.
\newblock \href {http://dx.doi.org/10.1016/j.patrec.2010.06.015}
  {\path{doi:10.1016/j.patrec.2010.06.015}}.

\bibitem{SvalbeKingston}
I.~Svalbe, A.~Kingston, On correcting the unevenness of angle distributions
  arising from integer ratios lying in restricted portions of the {F}arey
  plane, Combinatorial Image Analysis. 10th International Workshop, IWCIA 2004.
  Proceedings (LNCS) 3322 (2004) 110--121.

\bibitem{Franel1924}
J.~Franel, Les suites de farey et le probl\'eme des nombres premiers, Gottinger
  Nachr (1924) 191–201.

\bibitem{Landau1924}
E.~Landau, Bemerkungen zu der vorstehenden abhandlung von herrn franel,
  G\"ottinger Nachr (1924) 202--206.

\bibitem{Rader1968}
C.~M. Rader, Discrete {F}ourier transforms when the number of data samples is
  prime, Proceedings of the IEEE 56~(6) (1968) 1107--1108.

\bibitem{Agarwal1974}
R.~Agarwal, C.~Burrus, Fast convolution using {F}ermat number transforms with
  applications to digital filtering, Acoustics, Speech and Signal Processing,
  IEEE Transactions on 22~(2) (1974) 87--97.

\bibitem{Chandra2009b}
S.~S. Chandra, \href{http://finitetransform.sourceforge.net}{The {F}inite
  {T}ransform {L}ibrary ({FTL}) featuring the {NTTW} library}, C/C++ Library
  (Open Source under GPL v3) 1.0, Monash University, Australia, Available at
  SourceForge.net (2009).
\newline\urlprefix\url{http://finitetransform.sourceforge.net}

\end{thebibliography}

\acrodef{1D}{one dimensional}
\acrodef{2D}{two dimensional}
\acrodef{RT}{Radon Transform}
\acrodef{CT}{Computed Tomography}
\acrodef{DT}{Discrete Tomography}
\acrodef{FT}{Fourier Transform}
\acrodef{FST}{Fourier Slice Theorem}
\acrodef{DRT}{Discrete Radon Transform}
\acrodef{NRT}{Number-Theoretic Radon Transform}
\acrodef{FRT}{Fast Radon Transform}
\acrodef{MT}{Mojette Transform}
\acrodef{FMT}{Fast Mojette Transform}
\acrodef{FFT}{Fast Fourier Transform}
\acrodef{DFT}{Discrete Fourier Transform}
\acrodef{NTT}{Number Theoretic Transform}
\acrodef{DPM}{Dirac Pixel Model}
\acrodef{2PSE}{2-Point Structuring Element}
\acrodef{CBP}{Circulant Back-Projection}
\acrodef{PCT}{Projection Convolution Theorem}
\acrodef{FTL}{Finite Transform Library}

\end{document}